\crefname{@theorem}{Theorem}{Theorems}
\Crefname{@theorem}{Theorem}{Theorems}
\newcommand*\SetMark[1]{\tikz[remember picture, overlay]{\node (#1) {};}}
\newcommand*\DrawBox[2]{\tikz[remember picture, overlay]{\node [
  fill=#2, inner sep=0pt, opacity=.15, fit={
    ($(_NW) + (-1.5em,.7\baselineskip + 2pt)$)
    ($(_SW) + (-1.5em + \linewidth,.7\baselineskip + #1)$)
  }
] {};}}
\newcommand*\BeginBox{\vspace*{2pt}\SetMark{_NW}}
\newcommand*\EndBoxBeginBox[1]{\vspace*{6pt}\SetMark{_SW}\DrawBox{4pt}{#1}\SetMark{_NW}}
\newcommand*\EndBox[1]{\vspace*{4pt}\SetMark{_SW}\DrawBox{2pt}{#1}}
\newcommand*\BoxHeading[3]{\BoxComment*[#1]{\textbf{\textrm{\textcolor{#2}{#3}}}\hspace{-1em}}}
\title{\Large Fast, Exact and Scalable Dynamic Ridesharing\thanks{This work was funded by Robert Bosch GmbH, Corporate Sector Research and Advance Engineering.}}
\author{
  Valentin Buchhold\thanks{Karlsruhe Institute of Technology.}\saveFN{\affiliation} \and
  Peter Sanders\useFN{\affiliation} \and
  Dorothea Wagner\useFN{\affiliation}
}
\date{}
\newcommand*\Dist{\mathit{dist}}
\newcommand*\Rank{\mathit{rank}}
\newcommand*\InitialLocation{l_\mathsf{i}}
\newcommand*\CurrentLocation{l_\mathsf{c}}
\newcommand*\StopTime{t_\mathsf{stop}}
\newcommand*\Now{t_\mathsf{now}}
\newcommand*\DepTime{t_\mathsf{dep}}
\newcommand*\ArrTime{t_\mathsf{arr}}
\newcommand*\MinDepTime[1][]{\DepTime^\mathsf{min#1}}
\newcommand*\MaxDepTime[1][]{\DepTime^\mathsf{max#1}}
\newcommand*\MaxArrTime[1][]{\ArrTime^\mathsf{max#1}}
\newcommand*\MinServTime{t_\mathsf{serv}^\mathsf{min}}
\newcommand*\MaxServTime{t_\mathsf{serv}^\mathsf{max}}
\newcommand*\MaxWaitTime{t_\mathsf{wait}^\mathsf{max}}
\newcommand*\MaxTripTime{t_\mathsf{trip}^\mathsf{max}}
\newcommand*\WaitViolationWeight{\gamma_\mathsf{wait}}
\newcommand*\TripViolationWeight{\gamma_\mathsf{trip}}
\newcommand*\Detour{\delta}
\newcommand*\PickupDetour{\Detour_\mathsf{p}}
\newcommand*\DropoffDetour{\Detour_\mathsf{d}}
\newcommand*\Parent{\mathit{parent}}
\newcommand*\SourceBucket{B_\mathsf{s}}
\newcommand*\TargetBucket{B_\mathsf{t}}
\newcommand*\VehicleSpd{v_\mathsf{veh}}
\newcommand*\DistScalingParam{\sigma_\mathsf{dist}}
\newcommand*\SpdScalingParam{\sigma_\mathsf{spd}}
\begin{document}

\maketitle

\begin{abstract}
  \small\baselineskip=9pt
  We study the problem of servicing a set of ride requests by dispatching a set of shared vehicles, which is faced by ridesharing companies such as Uber and Lyft. Solving this problem at a large scale might be crucial in the future for effectively using large fleets of autonomous vehicles. Since finding a solution for the entire set of requests that minimizes the total driving time is NP-complete, most practical approaches process the requests one by one. Each request is inserted into any vehicle's route such that the increase in driving time is minimized. Although this variant is solvable in polynomial time, it still takes considerable time in current implementations, even when inexact filtering heuristics are used. In this work, we present a novel algorithm for finding best insertions, based on (customizable) contraction hierarchies with local buckets. Our algorithm finds provably exact solutions, is still 30~times faster than a state-of-the-art algorithm currently used in industry and academia, and scales much better. When used within iterative transport simulations, our algorithm decreases the simulation time for largescale scenarios with many requests from days to hours.
\end{abstract}

\section{Introduction.}
\label{sec:introduction}

Taxi-like transport options such as cabs, minibuses, rickshaws and ridesharing services already play a vital role in meeting the transport demand in metropolitan areas. They may become even more important in the presence of intelligent ridesharing software, autonomous vehicles, and the desire to combat traffic jams, accidents, air pollution, and lack of sufficient parking. With many thousands and eventually millions of vehicles and riders, this yields fairly complex combinatorial optimization problems that have to be solved in real time. In order to evaluate the impact of ridesharing on people, the environment and the economy, we also have to simulate large realistic scenarios \emph{now}. This requires processing millions of ride requests again and again. For example, one of the leading transport simulators~\cite{HorniNA16} performs hundreds of runs in order to compute realistic activity-travel patterns that describe how travelers behave under certain assumptions.

Current approaches to solve the ridesharing problem require a huge number of calls to Dijkstra's shortest-path algorithm. These are prohibitively expensive for large-scale transport simulations and they are a limiting factor for real-time dispatching of large fleets in metropolitan areas. The goal of this work is to show how to replace Dijkstra's classic algorithm with much faster route planning algorithms.

Ridesharing problems come in a wide variety with different assumptions, objectives, and constraints. To make our work tractable and concrete, we focus on one particular scenario adopted by a leading group in transport simulation~\cite{BischoffMN17, HorniNA16}. This scenario mimics a ridesharing service that answers real-time requests for immediate rides from a given source to a given target. The dispatching algorithm knows the current routes of a fleet of vehicles, each of which has a certain number of seats. The algorithm tries all possible ways to insert a ride request into each vehicle's route. The objective is to minimize the total operation time of the fleet. There are also constraints on the maximum wait time and the maximum time when a rider should reach their target. The best insertion that satisfies all constraints is selected. We use a network with scalar (time-independent) travel times. However, by building on customizable contraction hierarchies~\cite{DibbeltSW16}, we can quickly update these costs according to the current traffic situation every few minutes.

Our novel dispatching algorithm LOUD (for \underline{lo}cal b\underline{u}ckets \underline{d}ispatching) adapts bucket-based contraction hierarchies~\cite{KnoppSSSW07} developed for many-to-many shortest-path computations to the ridesharing problem. 

Contraction hierarchies (CHs)~\cite{GeisbergerSSV12} are a point-to-point route planning technique that is much faster than Dijkstra's algorithm (four orders of magnitude on continental networks). CHs replace systematic exploration of \emph{all} vertices in the network with two much smaller search spaces (forward and reverse) in directed acyclic graphs, in which each edge leads to a ``more important'' vertex. Customizable contraction hierarchies (CCHs)~\cite{DibbeltSW16} are a variant of CHs that can handle updates to the edge costs quickly (e.g., to support real-time traffic updates).

CHs with buckets (BCHs)~\cite{KnoppSSSW07} extend standard and customizable CHs to the many-to-many shortest-path problem by storing CH search spaces in buckets. More precisely, if $v$ appears in a search space from $s$ with distance~$x$, then $(s, x)$ is stored in a \emph{bucket}~$B(v)$ associated with $v$. For example, assume that we have stored the forward search spaces of a set~$S$ of vertices in buckets. Now, we can perform a many-to-one query (from $S$ to a vertex~$t$) by computing the reverse CH search space from $t$. For each vertex~$v$ in the search space with distance~$y$ to $t$, we scan the bucket~$B(v)$. For each entry~$(s, x) \in B(v)$, we obtain $x + y$ as a candidate for the shortest-path distance from $s$ to $t$.

Geisberger et al.~\cite{GeisbergerLSNV10} adapt BCH to a simple carpooling problem, where drivers with a fixed source and target can pick up and drop off passengers heading the same way, as a means of sharing the costs of travel. Their problem, however, is very simplistic. The authors neglect departure times, vehicles shared with more than one passenger, and vehicles already on their way.

\subsubsection*{Our Contribution.}

We present LOUD, a novel algorithm for the problem outlined above. LOUD maintains the forward and reverse CH search spaces of all scheduled (but not completed) pickups and dropoffs in buckets. From these buckets, LOUD can quickly obtain the cost of each possible insertion (i.e., the increase in operation time that is caused by the insertion).

One of our main contributions is a technique to aggressively prune the buckets, so that only those entries remain that can possibly contribute to feasible insertions. This technique decreases the search-space size by a factor of more than 20. Another major contribution is a filtering technique that restricts the search for the best insertion to a small set of promising vehicles. We stress that both techniques do not sacrifice optimality. A contribution that is also applicable to other dispatching algorithms is a data structure for checking whether an insertion into a vehicle's route satisfies the constraints of each rider assigned to the same vehicle. We can do this in constant time, independent of the number of riders assigned to the vehicle.

We extensively evaluate LOUD on the state-of-the-art Open Berlin Scenario~\cite{ZiemkeKN19} and a second, even larger benchmark instance. The experimental results show that LOUD is 30~times faster than algorithms currently used in industry and academia. When used in a transport simulator that performs hundreds of runs, the simulation time decreases from days to hours.

\subsubsection*{Related Work.}

Dynamic ridesharing is related to the classic \emph{dial-a-ride problem} (DARP) in operations research; see \cite{CordeauL07, HoSKLPT18} for recent overviews. The DARP literature, however, primarily considers the \emph{static} variant (where all ride requests are known in advance), often defines the problem on a complete graph, and frequently solves only small instances (using integer linear programming methods in many cases). For these reasons, most DARP approaches are unsuitable for modern largescale ridesharing services.

Finding a solution for an entire set of ride requests that minimizes the total driving time is NP-complete by reduction from the traveling salesman problem with time windows~\cite{MaZW13, Savelsbergh85}. Jung et al.~\cite{JungJP16} propose a simulated-annealing algorithm for this problem. More scalable approaches insert the requests one by one into any vehicle's route while leaving all other vehicle routes unchanged (often using inexact filtering heuristics).

The dispatching algorithm~\cite{BischoffMN17} used by the transport simulation \emph{MATSim}~\cite{HorniNA16} works in three phases. Given a ride request, the first phase tries \emph{all} possible insertions into \emph{each} vehicle's route. For efficiency, all needed detour times are estimated using geometric distances. The second phase uses Dijkstra's algorithm~\cite{Dijkstra59} to compute exact detour times for each insertion that is feasible based on the detour estimates. The last phase evaluates all filtered insertions again (now using exact detour times) and picks the best insertion among those.

The \emph{T-Share algorithm}~\cite{MaZW13} partitions the network into cells using a grid and precomputes the shortest-path distance between all cell centers. To quickly find a heuristic set of candidate vehicles, T-Share searches cells close to the request's source and target cell. For each candidate vehicle, T-Share tries all possible insertions. Each insertion is first evaluated using detour estimates based on precomputed distances, and if it looks feasible, T-Share computes exact (shortest-path) detour times.

Huang et al.~\cite{HuangBJW14} also use grid partitions to find a heuristic set of candidate vehicles. They allow to reorder requests already assigned to a vehicle. Shortest-path distances are computed using a very fast point-to-point routing algorithm (hub labeling~\cite{AbrahamDGW11}) and caching.

A special case of dynamic ridesharing is \emph{dynamic carpooling}, a problem faced by carpooling services such as BlaBlaCar. In this case, the vehicle routes are not determined solely by the passengers. Instead, each driver has a fixed source and target and can pick up and drop off passengers heading the same way, as a means of sharing the costs of travel. Moreover, all constraints (such as an upper bound on the detour time) apply not only to passengers but also to drivers.

Pelzer et al.~\cite{PelzerXZLKA15} partition the network along main roads into cells. For each vehicle, they maintain the sequence of cells through which the vehicle will pass (its \emph{corridor}). A vehicle is a candidate for servicing a given ride request if the pickup is in the same cell as the vehicle and the dropoff is in the corridor of the vehicle. For each candidate vehicle, the authors compute exact detour times using Dijkstra's shortest-path algorithm.

The carpooling algorithm by Geisberger et al.~\cite{GeisbergerLSNV10} is based on the route planning technique contraction hierarchies (CHs)~\cite{GeisbergerSSV12}. It stores the forward and reverse CH search space of each vehicle's source and target, respectively, in buckets~\cite{KnoppSSSW07}. Given a ride request, the buckets are used to compute exact detour times for \emph{all} vehicles. The studied problem, however, is very simplistic. The authors neglect departure times and can match neither more than one request with the same vehicle nor vehicles that are already on their way. Abraham et al.~\cite{AbrahamDFGW12} solve the same simplistic problem in a database, with CH search spaces stored in tables.

Herbawi and Weber~\cite{HerbawiW12} combine an insertion-based algorithm with periodic reoptimizations using a relatively slow evolutionary algorithm.

There has also been previous work on \emph{multi-hop} carpooling~\cite{DrewsL13, MasoudJ17}, where passengers can transfer from one vehicle to another as part of a single journey. These algorithms model the problem as a time-expanded graph~\cite{PallottinoS98}, similar to graph-based techniques for journey planning in public transit networks~\cite{SchulzWW00, BastCEGHRV10, DellingDPW15}. To avoid combinatorial explosion, however, they need to discretize both space and time. That is, they do not support door-to-door transport and departures, arrivals and transfers can only happen at interval endpoints. Despite these limitations, the matching algorithms are relatively slow, even on medium-sized instances.

\subsubsection*{Outline.}

This work is organized as follows. \Cref{sec:problem-statement} provides a precise definition of the basic problem we solve. \Cref{sec:preliminaries} briefly reviews crucial building blocks LOUD builds on. \Cref{sec:approach} describes LOUD in detail, including extensions to meet additional requirements of real-world production systems. \Cref{sec:experiments} presents an extensive experimental evaluation on various benchmark instances, which includes a comparison to related work. \Cref{sec:conclusion} concludes with final remarks.

\section{Problem Statement.}
\label{sec:problem-statement}

This section defines the basic problem we consider. Potential extensions will be discussed in \cref{sec:extensions}.

We treat a road network as a directed graph~$G = (V, E)$ where vertices represent intersections and edges represent road segments. Each edge~$(v, w) \in E$ has a nonnegative length~$\ell(v, w)$ representing the travel time between $v$ and $w$. Note that we denote by $\Dist(v, w)$ the shortest-path distance (i.e., travel time) from $v$ to $w$.

We are given a set of vehicles. Each vehicle~$\nu = (\InitialLocation, c, \MinServTime, \MaxServTime)$ has an initial location~$\InitialLocation$, a seating capacity~$c$, and a service interval~$[\MinServTime, \MaxServTime)$. For each vehicle~$\nu$, we maintain its route~$R(\nu) = \langle s_0, \dots, s_k \rangle$, which is a sequence of stops~$s$ at locations~$l(s) \in V$ that are already scheduled for the vehicle. At each stop, the vehicle picks up and/or drops off one or more riders. Independent of the number of riders boarding and alighting, each stop takes time~$\StopTime$. Each vehicle's route is continuously updated according to the current situation. More precisely, if a vehicle~$\nu$ is currently making a stop, then $s_0$ is the current stop. If a vehicle~$\nu$ is currently driving, then $s_0$ is the previous stop (i.e., the vehicle's current location~$\CurrentLocation(\nu)$ is somewhere between $s_0$ and $s_1$). Idle vehicles prolong their last stop. Abusing notation, we sometimes use stops as vertices. For example, $\Dist(s, s')$ is a shorthand for $\Dist(l(s), l(s'))$.

We consider a scenario in which a dispatching server receives ride requests and immediately matches them to vehicles. Each request~$r = (p, d, \MinDepTime)$ has a pickup spot~$p \in V$, a dropoff spot~$d \in V$, and an earliest departure time~$\MinDepTime$. We do not allow pre-booking, i.e., each ride request is submitted, received and matched at $\MinDepTime$. Note that this is by far the most common scenario, adopted by the leading ridehailing services Uber and Lyft and also by related work~\cite{BischoffMN17, MaZW13, HuangBJW14, JungJP16}. The goal is to insert each request into any vehicle's route such that the vehicle's detour~$\Detour$ (i.e., the increase in operation time) is minimized. Formally, an insertion can be described by a quadruple~$(\nu, r, i, j)$ indicating that vehicle~$\nu$ picks up request~$r$ immediately after stop~$s_i(\nu)$ and drops off $r$ immediately after stop~$s_j(\nu)$. Besides capacity and service time constraints, the insertion is subject to two additional constraints.

\begin{enumerate}[(1)]
  \item The \emph{wait time} for each request~$r'$ already matched to the vehicle must not exceed a certain threshold, i.e., after the insertion the vehicle must still pick up request~$r'$ no later than $\MaxDepTime(r') = \MinDepTime(r') + \MaxWaitTime$, where $\MaxWaitTime$ is a model parameter.
  \item The \emph{trip time} for each request~$r'$ already matched to the vehicle must not exceed a certain threshold, i.e., after the insertion the vehicle must still drop off $r'$ no later than $\MaxArrTime(r') = \MinDepTime(r') + \MaxTripTime(r') = \MinDepTime(r') + \alpha \cdot \Dist(p(r'), d(r')) + \beta$, where $\alpha$ and $\beta$ are model parameters as well.
\end{enumerate}

For each request already matched to the vehicle, (1) and (2) are \emph{hard} constraints, i.e., they must always be satisfied. If any wait or trip time constraint is violated, the insertion is feasible only if it leads to no additional delay for any already matched request. For the request~$r$ to be inserted, (1) and (2) are \emph{soft} constraints, i.e., they may be violated. However, the violation of the wait time constraint and the violation of the trip time constraint are added to the objective value. More precisely, the objective value~$f(\iota)$ of an insertion~$\iota$ is
\begin{equation}
  \label{eq:objective-value}
  \begin{split}
    f(\iota) = \Detour &+ \WaitViolationWeight \cdot \max\{\DepTime(p(r)) - \MaxDepTime(r), 0\}\\
                       &+ \TripViolationWeight \cdot \max\{\ArrTime(d(r)) - \MaxArrTime(r), 0\},
  \end{split}
\end{equation}
where $\DepTime(p(r))$ is the scheduled departure time at the pickup spot, $\ArrTime(d(r))$ is the scheduled arrival time at the dropoff spot, and $\WaitViolationWeight$ and $\TripViolationWeight$ are parameters.

Whenever a request is received, the goal is to find the insertion $\iota$ into any route that minimizes $f(\iota)$. If there is no feasible insertion, the request is rejected. However, since the wait and trip time constraint are soft for the request to be inserted, a request is rejected only if all vehicles go out of service before the request can be served. With unbounded service intervals (which are feasible for driverless vehicles), no requests are rejected.

\section{Preliminaries.}
\label{sec:preliminaries}

A crucial building block of LOUD are bucket-based contraction hierarchies. In the following, we first briefly review Dijkstra's shortest-path algorithm and then discuss contraction hierarchies and customizable contraction hierarchies, which are both speedup techniques for Dijkstra. Finally, we consider bucket-based (customizable) contraction hierarchies, an extension to batched shortest paths such as the one-to-many and many-to-many shortest-path problem.

\subsection{\hspace{-1pt}Dijkstra's Algorithm.}

\emph{Dijkstra's algorithm}~\cite{Dijkstra59} computes the shortest-path distances from a source vertex~$s$ to all other vertices. For each vertex~$v$, it maintains a \emph{distance label}~$d_s(v)$, which represents the length of the shortest path from $s$ to $v$ seen so far. Moreover, it maintains an addressable priority queue~$Q$~\cite{SandersMDD19} of vertices, using their distance labels as keys. Initially, $d_s(s) = 0$ for the source~$s$, $d_s(v) = \infty$ for each vertex~$v \ne s$, and $Q = \{s\}$.

The algorithm repeatedly extracts a vertex~$v$ with minimum distance label from the queue and \emph{settles} it by \emph{relaxing} its outgoing edges~$(v, w)$. To relax an edge~$e = (v, w)$, the path from $s$ to $w$ via $v$ is compared with the shortest path from $s$ to $w$ found so far. More precisely, if $d_s(v) + \ell(e) < d_s(w)$, the algorithm sets $d_s(w) = d_s(v) + \ell(e)$ and inserts $w$ into the queue. It stops when the queue becomes empty.

\subsection{Contraction Hierarchies.}

\emph{Contraction hierarchies} (CHs)~\cite{GeisbergerSSV12} are a two-phase speedup technique to accelerate point-to-point shortest-path computations, which exploits the inherent hierarchy of road networks. To differentiate them from customizable CHs, we sometimes call them \emph{weighted} or \emph{standard} CHs. The preprocessing phase heuristically orders the vertices by importance, and \emph{contracts} them from least to most important. Intuitively, vertices that hit many shortest paths are considered more important, such as vertices on highways. To contract a vertex~$v$, it is temporarily removed from the graph, and \emph{shortcut} edges are added between its neighbors to preserve distances in the remaining graph (without $v$). Note that a shortcut is only needed if it represents the only shortest path between its endpoints, which can be checked by running a \emph{witness search} (local Dijkstra) between its endpoints.

The query phase performs a bidirectional Dijkstra search on the augmented graph that only relaxes edges leading to vertices of higher \emph{ranks} (importance). More precisely, let a \emph{forward CH search} be a Dijkstra search that relaxes only outgoing upward edges, and a \emph{reverse CH search} one that relaxes only incoming downward edges. A \emph{CH query} runs a forward CH search from the source and a reverse CH search from the target until the search frontiers meet. The stall-on-demand~\cite{GeisbergerSSV12} optimization prunes the search at any vertex~$v$ with a suboptimal distance label, which can be checked by looking at the downward edges coming into $v$.

\subsection{\hspace{-4pt}Customizable Contraction Hierarchies.}

\emph{Customizable contraction hierarchies} (CCHs)~\cite{DibbeltSW16} are a three-phase technique, splitting CH preprocessing into a relatively slow metric-independent phase and a much faster customization phase. The metric-independent phase computes a \emph{separator decomposition}~\cite{BauerCRW16} of the unweighted graph, determines an associated \emph{nested dissection order}~\cite{George73} on the vertices, and contracts them in this order without running witness searches (which depend on the metric). Therefore, it adds every potential shortcut. The customization phase computes the lengths of the edges in the hierarchy by processing them in bottom-up fashion. To process an edge~$(u, w)$, it enumerates all triangles~$\{v, u, w\}$ where $v$ has lower rank than $u$ and $w$, and checks whether the path~$\langle u, v, w \rangle$ improves the length of $(u, w)$. Alternatively, Buchhold et al.~\cite{BuchholdSW19} enumerate all triangles~$\{u, w, v'\}$ where $v'$ has higher rank than $u$ and $w$, and check if the path~$\langle v', u, w \rangle$ improves the length of $(v', w)$, accelerating the customization phase by a factor of 2.

There are two query algorithms. First, one can run a standard CH query. Second, there is a query algorithm based on the \emph{elimination tree} of the hierarchy. The parent of a vertex in the elimination tree is the lowest-ranked of its higher-ranked neighbors in the hierarchy. Bauer et al.~\cite{BauerCRW16} prove that the ancestors of a vertex~$v$ in the elimination tree are exactly the set of vertices scanned by a Dijkstra-based CCH search from $v$. An elimination tree search from $v$ therefore scans all vertices in the CCH search space of $v$ in order of increasing rank by traversing the path in the elimination tree from $v$ to the root. As elimination tree queries use no priority queues, they are usually faster than Dijkstra-based CCH queries. Buchhold et al.~\cite{BuchholdSW19} propose further optimizations for the elimination tree search. 

\subsection{CHs with Buckets.}

The \emph{bucket-based approach} by Knopp et al.~\cite{KnoppSSSW07} extends any hierarchical speedup technique such as CHs and CCHs to batched shortest paths. In the one-to-many shortest-path problem, the goal is to compute shortest paths from a source~$s \in V$ to each target~$t \in T \subseteq V$. A bucket-based CH (BCH) search maintains a tentative distance~$D_s(t)$ from $s$ to each $t$, initialized to $\infty$, and for each vertex~$h$ an initially empty bucket~$B(h)$. First, the algorithm runs a reverse CH search from each $t$ and inserts, for each vertex $h$ settled, an entry~$(t, d_t(h))$ into $B(h)$. Note that $(t, d_t(h))$ can be thought of as a shortcut from $h$ to $t$ with length~$d_t(h)$. Then, the algorithm runs a forward CH search from $s$ and loops, for each vertex~$h$ settled, over all entries~$(t, d_t(h)) \in B(h)$. If $d_s(h) + d_t(h) < D_s(t)$, it sets $D_s(t) = d_s(h) + d_t(h)$. Many-to-one queries from each source~$s \in S \subseteq V$ to a target~$t \in V$ work analogously. In this case, each bucket~$B(h)$ stores shortcuts from several $s$ to $h$.

\section{Our Approach.}
\label{sec:approach}

We begin with a high-level description of LOUD, our new algorithm for dispatching a fleet of shared vehicles. Let $r = (p, d, \MinDepTime)$ be the ride request to be inserted and let $\nu$ be a vehicle with route~$R(\nu) = \langle s_0, \dots, s_k \rangle$. We will ignore some special cases for now but will discuss them later. In particular, we defer insertions~$(\nu, r, i, j)$ with $i = 0$ or $j = k$ to \cref{sec:special-cases}.

To find the best insertion for request~$r$, we consider a superset~$C$ of the vehicles~$\nu$ that allow at least one feasible insertion~$(\nu, r, i, j)$ with $i \ne k$. For each vehicle~$\nu \in C$, we look at all insertions~$(\nu, r, i, j)$ with $0 < i \le j < k$. For each such insertion, we check whether the hard constraints are satisfied and compute the insertion cost according to \cref{eq:objective-value}, i.e., the vehicle's detour plus the violations of the soft constraints (if any). When the algorithm stops, we return the best feasible insertion seen so far.

To compute the cost of an insertion~$(\nu, r, i, j)$, we generally need the distance~$\Dist(s_i, p)$ from stop~$s_i$ to the pickup spot~$p$, the distance~$\Dist(p, s_{i + 1})$ from $p$ to stop~$s_{i + 1}$, the distance~$\Dist(s_j, d)$ from stop~$s_j$ to the dropoff spot~$d$, and finally the distance~$\Dist(d, s_{j + 1})$ from $d$ to stop~$s_{j + 1}$. We propose using BCHs to compute these distances. For each vertex~$h$, we maintain a \emph{source bucket}~$\SourceBucket(h)$ and a \emph{target bucket}~$\TargetBucket(h)$, both initially empty. Whenever we insert a stop~$s$ into a vehicle's route, we run a forward (reverse) CH search from $s$ and insert, for each vertex~$h$ settled by the search, an entry~$(s, d_s(h))$ into $\SourceBucket(h)$ ($\TargetBucket(h)$). When we receive request~$r$, we run two forward BCH searches (from $p$ and from $d$) that scan the target buckets, and two reverse BCH searches (from $p$ and from $d$) that scan the source buckets. This gives us the distances we need to compute the costs of all candidate insertions.

We are now ready to introduce one of the main ideas of LOUD. We observe that the leeway~$\lambda$ between each pair of consecutive stops we have to insert new stops is bounded, due to the hard constraints for the requests already matched to a vehicle. That is, we are not allowed to take arbitrarily long detours between two consecutive stops on a vehicle's route. See \cref{fig:elliptic-pruning} for an illustration. Each additional stop~$s$ we may insert between stops~$s_i$ and $s_{i + 1}$ has to lie inside a \emph{shortest-path ellipse}, defined as the set of vertices~$v$ with $\Dist(s_i, v) + \Dist(v, s_{i + 1}) \le \lambda$ (i.e., $s_i$ and $s_{i + 1}$ are the foci of the ellipse). Naturally, the entire shortest path from $s_i$ via $s$ to $s_{i + 1}$ has to lie inside the ellipse. Hence, when computing source bucket entries from $s_i$, we need to insert an entry~$(s_i, d_{s_i}(h))$ into $\SourceBucket(h)$ only if $h$ lies inside the ellipse around $s_i$ and $s_{i + 1}$. Target bucket entries can be pruned analogously. We call this \emph{elliptic pruning} and it is surprisingly effective, as our experiments in \cref{sec:experiments} will show.

\begin{figure}[tb]
  \centering
  \begin{tikzpicture}
  [vertex/.style={draw, fill, inner sep=0pt},
   stop/.style={vertex, circle, minimum size=1ex},
   source hub/.style={vertex, rectangle, minimum size=1 / sqrt(2) * 1ex},
   target hub/.style={vertex, diamond, minimum size=1ex},
   path/.style={>=stealth, shorten >=1pt, shorten <=1pt},
   possible path/.style={path, dash pattern=on 1.5pt off 1.5pt},
   leg/.style={path},
   leeway/.style 2 args={#1, draw, fill, fill opacity=#2},
   scale=.65]
  \def\Alpha{1.5}

  \begin{scope}[rotate=-30]
    \node [stop] (s1) at (-2.5,0) {};
    \node [stop] (s2) at ( 2.5,0) {} edge [leg, <-] (s1);

    \path [KITlilac] (s1) +( 48:2.0) node [source hub] {} edge [path, <-] (s1);
    \path [KITlilac] (s1) +( 96:1.5) node [source hub] {} edge [path, <-] (s1);
    \path [KITpalegreen] (s1) +(144:0.5) node [target hub] {} edge [path, ->] (s1);
    \path [KITpalegreen] (s1) +(192:1.5) node [target hub] (p-s1 hub) {} edge [path, ->] (s1);
    \path [KITlilac] (s1) +(280:1.0) node [source hub] {} edge [path, <-] (s1);
    \path [KITpalegreen] (s1) +(320:1.5) node [target hub] {} edge [path, ->] (s1);

    \path [KITlilac] (s2) +(100:2.0) node [target hub] {} edge [path, ->] (s2);
    \path [KITcyanblue] (s2) +(140:1.0) node [source hub] {} edge [path, <-] (s2);
    \path [KITlilac] (s2) +(228:2.0) node [target hub] {} edge [path, ->] (s2);
    \path [KITlilac] (s2) +(276:1.5) node [target hub] {} edge [path, ->] (s2);
    \path [KITcyanblue] (s2) +(324:0.5) node [source hub] {} edge [path, <-] (s2);
    \path [KITcyanblue] (s2) +(372:1.0) node [source hub] (s2-d hub) {} edge [path, <-] (s2);

    \begin{scope}[on background layer]
      \path [leeway={KITlilac}{0.10}]
          ellipse [x radius=\Alpha / 2 * 5, y radius=sqrt(\Alpha^2 - 1) / 2 * 5];
    \end{scope}
  \end{scope}

  \begin{scope}[shift={(-150:2)}, shift={(150:2.5)}, rotate=30]
    \node [stop] (s0) at (-2,0) {} edge [leg, ->] (s1);

    \path [KITpalegreen] (s0) +( 72:1.0) node [source hub] (s0-p hub) {} edge [path, <-] (s0);
    \path [KITpalegreen] (s0) +(144:0.5) node [source hub] {} edge [path, <-] (s0);
    \path [KITpalegreen] (s0) +(288:1.5) node [source hub] {} edge [path, <-] (s0);

    \node [stop, label=$p$] (p) at (0,{sqrt(\Alpha^2 - 1) / 4 * 4}) {}
        edge [possible path, <-] (s0-p hub)
        edge [possible path, ->] (p-s1 hub);

    \begin{scope}[on background layer]
      \path [leeway={KITpalegreen}{0.15}]
          ellipse [x radius=\Alpha / 2 * 4, y radius=sqrt(\Alpha^2 - 1) / 2 * 4];
    \end{scope}
  \end{scope}

  \begin{scope}[shift={(30:1.5)}, shift={(-30:2.5)}, rotate=30]
    \node [stop] (s3) at (1.5,0) {} edge [leg, <-] (s2);

    \path [KITcyanblue] (s3) +(252:1.0) node [target hub] (d-s3 hub) {} edge [path, ->] (s3);
    \path [KITcyanblue] (s3) +(396:0.5) node [target hub] {} edge [path, ->] (s3);
    \path [KITcyanblue] (s3) +(468:1.0) node [target hub] {} edge [path, ->] (s3);

    \node [stop, label=below:$d$] (p) at (0,-{sqrt(\Alpha^2 - 1) / 4 * 3}) {}
        edge [possible path, <-] (s2-d hub)
        edge [possible path, ->] (d-s3 hub);

    \begin{scope}[on background layer]
      \path [leeway={KITcyanblue}{0.15}]
          ellipse [x radius=\Alpha / 2 * 3, y radius=sqrt(\Alpha^2 - 1) / 2 * 3];
    \end{scope}
  \end{scope}
\end{tikzpicture}%
  \caption{A vehicle's route consisting of four stops and the bucket entries induced by them. The stops are shown as circles and the leeway between two consecutive stops is shown as an ellipse. Source bucket entries are shown as edges with square-shaped heads and target bucket entries are shown as edges with diamond-shaped tails. Green, lilac and blue bucket entries are pruned by the respective ellipse. Consider a request~$r = (p, d, \MinDepTime)$ where $p$ is to be inserted immediately after the first stop~$s_0$ and $d$ immediately before the last stop~$s_3$. Note that the shortest paths from $s_0$ to $s_1$ via $p$ and from $s_2$ to $s_3$ via $d$ lie entirely inside the respective ellipse.}
  \label{fig:elliptic-pruning}
\end{figure}

Elliptic pruning has multiple advantages. First, it accelerates the BCH searches, since these searches now scan smaller buckets. Second, it speeds up the removal of bucket entries that refer to completed stops. Note that whenever a vehicle completes a stop, the buckets are updated accordingly. The biggest advantage, however, is that elliptic pruning enables us to obtain a small superset~$C$ of the vehicles~$\nu$ that allow at least one feasible insertion~$(\nu, r, i, j)$ with $i \ne k$. Besides a stop identifier and a distance label, we store in each bucket entry the identifier of the vehicle to which the stop belongs. During the BCH searches, we insert all vehicle identifiers seen into $C$. Without elliptic pruning, the source and target bucket of the highest-ranked vertex in the hierarchy would contain an entry for each stop on each vehicle's route, and thus $C$ would contain each vehicle.

The following sections work out the details of LOUD. \Cref{sec:maintaining-feasibility} discusses how to check whether an insertion is feasible (i.e., satisfies the hard constraints) in constant time. \Cref{sec:elliptic-pruning} shows which bucket entries are necessary and sufficient to find the needed distances, and presents an algorithm that can efficiently check this elliptic pruning criterion. \Cref{sec:special-cases} discusses the special case of insertions~$(\nu, r, i, j)$ with $i = 0$ or $j = k$. \Cref{sec:putting-everything-together} assembles the basic LOUD algorithm from the building blocks introduced in the preceding sections. \Cref{sec:extensions} discusses additional requirements of real-world production systems such as incorporating real-time traffic information into the dispatching server and other potential objective functions.

\subsection{Maintaining Feasibility.}
\label{sec:maintaining-feasibility}

Consider a vehicle's route~$\langle s_0, \dots, s_k \rangle$ and a request~$r = (p, d, \MinDepTime)$. We need a subroutine that checks whether the service time constraint and the wait and trip time constraints for each request assigned to the vehicle are still satisfied when inserting pickup~$p$ immediately after $s_i$ and dropoff~$d$ immediately after $s_j$, $i \le j$. Since this operation is frequently used within LOUD (and even more frequently within competitors such as MATSim), it should be as fast as possible. This section shows how to check all constraints in constant time, independent of the number of stops and the number of requests assigned to the vehicle. Note that current approaches such as MATSim and T-Share take time linear in the length of the route.

For each stop~$s \in R$ on each vehicle route~$R$, we maintain the departure time~$\MinDepTime(s)$ at stop~$s$ when no further stops are inserted into the route. Moreover, we maintain the latest arrival time~$\MaxArrTime(s)$ at stop~$s$ so that all following pickups and dropoffs are on time. Whenever we insert a request~$r' = (p', d', \MinDepTime[\prime])$, yielding a route~$\langle s'_0, \dots, s'_{i'} = p', \dots, s'_{j'} = d', \dots, s'_{k'} \rangle$, we loop over all $s'_\ell$, $i' \le \ell \le k'$, in forward order and set
\begin{equation*}
  \MinDepTime(s'_\ell) = \MinDepTime(s'_{\ell - 1}) + \Dist(s'_{\ell - 1}, s'_\ell) + \StopTime.
\end{equation*}
Furthermore, we set $\MaxArrTime(s'_{i'}) = \MaxDepTime(r') - \StopTime$ as well as $\MaxArrTime(s'_{j'}) = \MaxArrTime(r')$. We propagate these constraints to all preceding stops by looping over all $s'_\ell$, $0 < \ell \le j'$, in reverse order and setting
\begin{equation*}
  \MaxArrTime(s'_\ell) = \min\{\MaxArrTime(s'_\ell), \MaxArrTime(s'_{\ell + 1})
      \scalebox{.5}[1]{$-$} \Dist(s'_\ell, s'_{\ell + 1}) \scalebox{.5}[1]{$-$} \StopTime\}.
\end{equation*}

The $\MinDepTime$ and $\MaxArrTime$ values allow us to check all service, wait and trip time constraints on a route in constant time. We are given a vehicle~$\nu$ with route~$\langle s_0, \dots, s_k \rangle$, a request~$(p, d, \MinDepTime)$, where $p$ is to be inserted immediately after $s_i$ and $d$ is to be inserted immediately after $s_j$, and the distances~$\Dist(s_i, p)$, $\Dist(p, s_{i + 1})$, $\Dist(s_j, d)$, and $\Dist(d, s_{j + 1})$. We first compute the pickup detour time~$\PickupDetour = \Dist(s_i, p) + \StopTime + \Dist(p, s_{i + 1}) - \Dist(s_i, s_{i + 1})$ and the dropoff detour time~$\DropoffDetour = \Dist(s_j, d) + \StopTime + \Dist(d, s_{j + 1}) - \Dist(s_j, s_{j + 1})$. Note that there is no need to store $\Dist(s_i, s_{i + 1})$ and $\Dist(s_j, s_{j + 1})$ explicitly, as they can be obtained from the $\MinDepTime$~values. An insertion then satisfies all time constraints if and only if
\begin{equation*}
  \begin{split}
    \MinDepTime(s_{i + 1}) - \StopTime + \PickupDetour &\le \MaxArrTime(s_{i + 1}) \text{ and}\\
    \MinDepTime(s_{j + 1}) - \StopTime + \PickupDetour + \DropoffDetour
        &\le \MaxArrTime(s_{j + 1}) \text{ and}\\
    \MinDepTime(s_k) + \PickupDetour + \DropoffDetour &\le \MaxServTime(\nu).
  \end{split}
\end{equation*}

An actual implementation needs to treat several special cases. For example, $p$ or $d$ can coincide with an existing stop, $p$ or $d$ can be inserted after $s_k$, or $d$ can be inserted immediately after $p$. However, all these cases are straightforward to implement and we do not discuss them in detail. The correctness of our approach follows directly from \cref{thm:feasibility-correctness}.

\begin{lemma}
  \label{thm:feasibility-correctness}
  All pickups and dropoffs at each stop $s_j$, $j \ge i$, on a vehicle's route are on time if and only if the vehicle arrives at $s_i$ no later than $\MaxArrTime(s_i)$.
\end{lemma}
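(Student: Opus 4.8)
The plan is to prove the biconditional by analyzing how the quantity $\MaxArrTime(s_\ell)$ was defined during the insertion procedure, and to induct over the stops of the route in reverse order, from $s_k$ down to $s_i$. The key observation is that the recursively propagated value $\MaxArrTime(s_\ell)$ is, by construction, the latest point in time at which the vehicle may \emph{arrive} at $s_\ell$ such that, proceeding from $s_\ell$ with the fixed remaining route and honoring the per-stop service time $\StopTime$, every subsequent pickup meets its $\MaxDepTime$ deadline and every subsequent dropoff meets its $\MaxArrTime$ deadline. So I would first make this informal reading precise: define, for each $\ell$, the predicate ``all pickups and dropoffs at stops $s_\ell, s_{\ell+1}, \dots, s_k$ are on time given that the vehicle reaches $s_\ell$ at time $\tau$,'' and show that this predicate holds if and only if $\tau \le \MaxArrTime(s_\ell)$.

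For the induction, the base case is $\ell = k$ (or more precisely the last stop that carries a deadline): here $\MaxArrTime(s_k)$ is set directly to the relevant rider deadline (the $\MaxDepTime$ of a pickup at $s_k$ minus $\StopTime$, or the $\MaxArrTime$ of a dropoff at $s_k$, whichever applies), so arriving by $\MaxArrTime(s_k)$ is exactly the condition that the event at $s_k$ is on time, and there are no later stops to constrain. For the inductive step, assume the claim holds at $s_{\ell+1}$. If the vehicle arrives at $s_\ell$ at time $\tau$, it departs $s_\ell$ no earlier than $\tau + \StopTime$ and hence arrives at $s_{\ell+1}$ no earlier than $\tau + \StopTime + \Dist(s_\ell, s_{\ell+1})$; moreover, since vehicles never wait at intermediate stops beyond the service time in this reverse-feasibility bookkeeping, it arrives \emph{exactly} at that time when it left $s_\ell$ as early as possible. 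By the inductive hypothesis, all events from $s_{\ell+1}$ onward are on time iff this arrival time is $\le \MaxArrTime(s_{\ell+1})$, i.e. iff $\tau \le \MaxArrTime(s_{\ell+1}) - \Dist(s_\ell, s_{\ell+1}) - \StopTime$. Combining this with the local deadline at $s_\ell$ itself (which contributes the other term in the $\min$), the conjunction of ``event at $s_\ell$ on time'' and ``all later events on time'' is precisely $\tau \le \min\{\MaxArrTime(s_\ell)^{\text{local}}, \MaxArrTime(s_{\ell+1}) - \Dist(s_\ell,s_{\ell+1}) - \StopTime\}$, which is exactly the value stored in $\MaxArrTime(s_\ell)$ by the reverse propagation loop. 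Specializing $\ell = i$ and $\tau$ to the actual arrival time of the vehicle at $s_i$ yields the lemma.

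I would also need to handle the bookkeeping subtlety that a stop $s_\ell$ may have \emph{no} directly associated rider deadline, in which case its ``local'' term is $+\infty$ and $\MaxArrTime(s_\ell)$ is purely inherited from $s_{\ell+1}$; and conversely a stop may serve as both a pickup for one rider and a dropoff for another, contributing two local terms, both absorbed into the $\min$ via successive insertions. These are routine once the invariant is stated with a $\min$ over all applicable local deadlines. The main obstacle I anticipate is pinning down the ``earliest departure'' semantics cleanly: one must argue that monotonicity makes it enough to check the scenario where the vehicle is never early (equivalently, where all slack is pushed as late as possible at every stop), so that a single scalar arrival time at $s_i$ suffices to certify feasibility of the whole suffix — in other words, that the feasibility region for the vector of arrival times is a down-set determined by its projection onto the first coordinate. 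Once that monotonicity is isolated, the rest is a direct unwinding of the two propagation recurrences against the definitions of $\MaxDepTime(r')$ and $\MaxArrTime(r')$.
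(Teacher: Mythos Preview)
Your proposal is correct. Both your argument and the paper's rest on the same observation---that the reverse propagation makes $\MaxArrTime(s_i)$ equal the minimum, over all requests served at some $s_j$ with $j \ge i$, of that request's deadline shifted back by the accumulated travel-plus-stop time from $s_i$ to $s_j$---but the packaging differs. You run a reverse induction on $\ell$, maintaining the invariant ``$\MaxArrTime(s_\ell)$ is the latest feasible arrival time at $s_\ell$ for the suffix''; the paper instead unwinds the recurrence in one step to the closed form $\MaxArrTime(s_i) \le \MaxDepTime(r) - \StopTime - \sum_{k=i}^{j-1}(\Dist(s_k,s_{k+1}) + \StopTime)$ (and the dropoff analogue), then handles the forward direction by contradiction and the converse by selecting a request that achieves the minimum. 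Your inductive framing is more explicit about the edge cases you flag (stops carrying no local deadline, stops serving multiple riders) and about the monotonicity that lets a single scalar certify the whole suffix; the paper absorbs all of that into ``by construction.'' Either route works, and the extra bookkeeping you anticipate is indeed routine.
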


\begin{proof}
  Let $t$ be the arrival time at $s_i$. We claim that all pickups and dropoffs at each subsequent stop~$s_j$ are on time if $t \le \MaxArrTime(s_i)$. Assume otherwise, that is, there exists a request~$r$ with either $p(r) = s_j$ and $\MaxDepTime(r) < t + \StopTime + \sum_{k = i}^{j - 1}(\Dist(s_k, s_{k + 1}) + \StopTime)$ or $d(r) = s_j$ and $\MaxArrTime(r) < t + \sum_{k = i}^{j - 1}(\Dist(s_k, s_{k + 1}) + \StopTime)$. In the former case, we have
  \begin{equation*}
    \MaxArrTime(s_i) \le \MaxDepTime(r) - \StopTime - \sum_{k = i}^{j - 1}(\Dist(s_k, s_{k + 1}) + \StopTime) < t,
  \end{equation*}
  where the first inequality follows from the construction of $\MaxArrTime(s_i)$ and the second inequality is the assumption. This contradicts $t \le \MaxArrTime(s_i)$. In the latter case, we have
  \begin{equation*}
    \MaxArrTime(s_i) \le \MaxArrTime(r) - \sum_{k = i}^{j - 1}(\Dist(s_k, s_{k + 1}) + \StopTime) < t,
  \end{equation*}
  where the first inequality follows from the construction of $\MaxArrTime(s_i)$ and the second inequality is the assumption. Again, this contradicts that $t \le \MaxArrTime(s_i)$.

  Assume conversely that all pickups and dropoffs at each subsequent stop~$s_j$ are on time. By construction of the $\MaxArrTime$~values, there is a request $r$ with either $\MaxArrTime(s_i) = \MaxDepTime(r) - \StopTime - \sum_{k = i}^{j - 1}(\Dist(s_k, s_{k + 1}) + \StopTime)$ or $\MaxArrTime(s_i) = \MaxArrTime(r) - \sum_{k = i}^{j - 1}(\Dist(s_k, s_{k + 1}) + \StopTime)$. In both cases, we have $\MaxArrTime(s_i) \ge t$ by assumption.
\end{proof}

\subsubsection*{Capacity Constraints.}

Besides service, wait and trip time constraints, we have to handle capacity constraints. To this end, we maintain, for each stop~$s \in R$ on each vehicle route~$R$, the occupancy~$o(s)$ (the number of occupied seats) when the vehicle departs from $s$. Whenever we insert a request~$r' = (p', d', \MinDepTime[\prime])$, yielding a route~$\langle s'_0, \dots, s'_{i'} = p', \dots, s'_{j'} = d', \dots, s'_{k' - 1} \rangle$, we update the occupancies as follows. We first set $o(s'_{i'}) = o(s'_{i' - 1})$ (if $s'_{i'}$ was not present before the insertion of $r'$) and then $o(s'_{j'}) = o(s'_{j' - 1})$ (if $s'_{j'}$ was not present before). Then, we loop over all $s'_\ell$, $i' \le \ell < j'$, and increment $o(s'_\ell)$. We use the $o$~values in \cref{sec:putting-everything-together}.

\subsubsection*{Implementation Details.}

We maintain one dynamic \emph{value array} per stop attribute (such as the stop location~$l$, the earliest departure time~$\MinDepTime$, and the latest arrival time~$\MaxArrTime$), which stores the attribute's value for all stops on all routes. The values for stops on the same route are stored consecutively in memory, in the order in which the stops appear on the route. In addition, all value arrays share a single \emph{index array}, which stores the starting point and ending point of each route's \emph{value block} in the dynamic value arrays.

When we remove a stop from a route, we move the resulting hole in the value arrays to the end of the route's value block, and decrement the block's ending point in the index array. Consider an insertion of a stop into a route. If the element immediately after the route's value block is a hole, we insert the new stop's value into the value block and move the values after the insertion point one position to the right. Analogously, if the element before the value block is a hole, we move the values before the insertion point one position to the left. Otherwise, we move the entire value block to the end of the value arrays, and additionally insert a number of holes after the value block (the number is a constant fraction of the block size). Then, there is a hole after the block, and we proceed as described above.

\subsection{Elliptic Pruning.}
\label{sec:elliptic-pruning}

We use BCHs to obtain the shortest-path distances needed to compute insertion costs, but carefully prune the source and target buckets. Let $s$ and $s'$ be two consecutive stops on a vehicle's route and let $v$ be a new pickup or dropoff spot. The leeway~$\lambda(s, s')$ we have to insert $v$ between $s$ and $s'$ is bounded by $\MaxArrTime(s') - \MinDepTime(s) - \StopTime$. More precisely, inserting $v$ between $s$ and $s'$ is feasible only if $\Dist(s, v) + \Dist(v, s') \le \lambda(s, s')$. Therefore, we only need to find shortest paths from all $s$ to $v$ such that $\Dist(s, v) + \Dist(v, s') \le \lambda(s, s')$. We now show which bucket entries are necessary and sufficient for the reverse BCH search from $v$ to find the needed distances. The case of the forward BCH search from $v$ is symmetric.

\begin{theorem}
  \label{thm:elliptic-pruning}
  Let $s$ and $s'$ be two consecutive stops on a vehicle's route with leeway~$\lambda$ between them. Consider the following two propositions:

  \begin{enumerate}[(1)]
    \item \label{enum:prop-1} For each vertex~$h \in V$, there is an entry~$(s, d_s(h))$ in the source bucket~$\SourceBucket(h)$ if
          \begin{enumerate}[(a)]
            \item $h$ is the highest-ranked vertex on all shortest \mbox{$s$--$h$} paths and
            \item $d_s(h) + \Dist(h, s') \le \lambda$.
          \end{enumerate}
    \item \label{enum:prop-2} A reverse BCH search from $v$ finds a shortest \mbox{$s$--$v$} path for each vertex~$v \in V$ with $\Dist(s, v) + \Dist(v, s') \le \lambda$.
  \end{enumerate}
  Then (\ref{enum:prop-1}) is a necessary and sufficient condition for (\ref{enum:prop-2}).
\end{theorem}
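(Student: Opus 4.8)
The plan is to establish the two implications between~(\ref{enum:prop-1}) and~(\ref{enum:prop-2}) separately, relying on two standard facts about (customizable) contraction hierarchies. First, for any vertices~$a$ and~$b$ there is a shortest \mbox{$a$--$b$} path whose highest-ranked vertex~$h$ (its \emph{apex}) splits it into a rank-increasing \mbox{$a$--$h$} part and a rank-decreasing \mbox{$h$--$b$} part; equivalently, a forward CH search from~$a$ settles~$h$ with $d_a(h) = \Dist(a, h)$ and a reverse CH search from~$b$ settles~$h$ with $d_b(h) = \Dist(h, b)$. Second, every distance stored in a bucket or computed by a CH search is the length of some path, so $d_s(h) \ge \Dist(s, h)$ and $d_v(h) \ge \Dist(h, v)$ throughout; hence every candidate $d_s(h) + d_v(h)$ examined by the reverse BCH search from~$v$ is at least $\Dist(s, v)$, i.e.\ the search can err only by \emph{over}estimating $\Dist(s, v)$.

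For the sufficiency direction~(\ref{enum:prop-1})~$\Rightarrow$~(\ref{enum:prop-2}), fix a vertex~$v$ with $\Dist(s, v) + \Dist(v, s') \le \lambda$ and let $h^*$ be the highest-ranked vertex lying on \emph{any} shortest \mbox{$s$--$v$} path. I would first show that $h^*$ meets condition~(a): if some shortest \mbox{$s$--$h^*$} path ran through a vertex~$w$ with $\Rank(w) > \Rank(h^*)$, then $\Dist(s, w) + \Dist(w, v) \le \Dist(s, w) + \Dist(w, h^*) + \Dist(h^*, v) = \Dist(s, v)$, so $w$ would lie on a shortest \mbox{$s$--$v$} path, contradicting maximality of~$h^*$; the same maximality argument, applied to the apex of a shortest \mbox{$h^*$--$v$} path, forces that apex to be~$h^*$ itself, so a reverse CH search from~$v$ settles~$h^*$ with $d_v(h^*) = \Dist(h^*, v)$. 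Condition~(b) then follows: by~(a) the shortest \mbox{$s$--$h^*$} path given by the first fact is rank-monotone, so a forward CH search from~$s$ settles~$h^*$ with $d_s(h^*) = \Dist(s, h^*)$, and since $h^*$ is on a shortest \mbox{$s$--$v$} path, a one-line triangle-inequality estimate gives $d_s(h^*) + \Dist(h^*, s') \le \Dist(s, v) + \Dist(v, s') \le \lambda$. Hence, by~(\ref{enum:prop-1}), $(s, d_s(h^*)) \in \SourceBucket(h^*)$, and scanning this bucket during the reverse BCH search from~$v$ yields the candidate $d_s(h^*) + d_v(h^*) = \Dist(s, v)$; combined with the ``no underestimate'' remark, the search returns exactly $\Dist(s, v)$, i.e.\ it finds a shortest \mbox{$s$--$v$} path.

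For the necessity direction~(\ref{enum:prop-2})~$\Rightarrow$~(\ref{enum:prop-1}), take any vertex~$h$ satisfying~(a) and~(b) and apply~(\ref{enum:prop-2}) with $v := h$: as above~(a) gives $d_s(h) = \Dist(s, h)$, so~(b) reads $\Dist(s, h) + \Dist(h, s') \le \lambda$ and $v = h$ is an admissible target. The reverse BCH search from~$h$ therefore reports $\Dist(s, h)$, meaning some vertex~$w$ in the reverse CH search space of~$h$ carries an entry $(s, d_s(w)) \in \SourceBucket(w)$ with $d_s(w) + d_h(w) = \Dist(s, h)$. By the ``no underestimate'' remark this forces $d_s(w) = \Dist(s, w)$, $d_h(w) = \Dist(w, h)$ and $\Dist(s, w) + \Dist(w, h) = \Dist(s, h)$, so $w$ lies on a shortest \mbox{$s$--$h$} path; by~(a) this gives $\Rank(w) \le \Rank(h)$. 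But membership of~$w$ in the reverse CH search space of~$h$ means there is a strictly rank-decreasing path from~$w$ to~$h$, which is impossible unless $w = h$. Hence $(s, d_s(h)) \in \SourceBucket(h)$, as claimed.

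The triangle-inequality manipulations and the ``no underestimate'' bookkeeping are routine; I expect the real work to be in the sufficiency direction, and specifically in the choice of the witness vertex. The naive candidate --- the apex of an arbitrarily chosen shortest \mbox{$s$--$v$} path --- need not satisfy condition~(a), which quantifies over \emph{all} shortest \mbox{$s$--$h^*$} paths; taking $h^*$ to be rank-maximal over the union of all shortest \mbox{$s$--$v$} paths is exactly what makes~(a) hold, and the same maximality simultaneously yields $h^*$ in the reverse CH search space of~$v$. A minor loose end is to confirm that stall-on-demand does not prune~$h^*$ from the reverse search, which it does not because $h^*$ is reached with its exact distance $\Dist(h^*, v)$; this is a known feature of CH correctness rather than a new obstacle.
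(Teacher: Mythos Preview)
Your proof is correct and follows essentially the same route as the paper: in both directions you pick the same witness vertex (the rank-maximal vertex over all shortest \mbox{$s$--$v$} paths for sufficiency, and $v:=h$ for necessity) and close with the same $\Rank(w)\le\Rank(h)\le\Rank(w)$ squeeze. If anything, your write-up is more careful than the paper's, which applies~(\ref{enum:prop-1}) to $h$ without explicitly checking condition~(a); your maximality argument showing that any higher-ranked vertex on a shortest \mbox{$s$--$h^*$} path would already lie on a shortest \mbox{$s$--$v$} path fills exactly that gap, and your separate treatment of the apex of the \mbox{$h^*$--$v$} segment makes explicit why $h^*$ lands in the reverse search space of~$v$.
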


\begin{figure}[tb]
  \centering
  \begin{tikzpicture}
  [vertex/.style={draw, fill, circle, inner sep=0pt, minimum size=1ex},
   path/.style={>=stealth, auto, ->, shorten >=1pt, shorten <=1pt},
   non-existent path/.style={path, dashed},
   path label/.style={anchor=mid},
   P/.style={color=KITgreen},
   Q/.style={color=KITblue},
   R/.style={color=KITcyanblue},
   P'/.style={color=KITpalegreen},
   scale=0.5]

  \path (0,0) node [vertex, label=below:$s$] (s) {}
      ++( 60:5) node [vertex, label=above:$h$ ] (h)  {}
      ++(-60:3) node [vertex, label=below:$v$ ] (v)  {}
      ++( 60:4) coordinate (v-s' hub)
      ++(-60:5) node [vertex, label=below:$s'$] (s') {};

  \path [name path=from s, overlay] (s)  -- +( 75:10);
  \path [name path=to h  , overlay] (h)  -- +(150:10);
  \path [name path=from h, overlay] (h)  -- +( 30:10);
  \path [name path=to s' , overlay] (s') -- +(105:10);

  \path [name intersections={of=from s and to h , by=[{vertex, label=$h'$}]h'}];
  \path [name intersections={of=from h and to s', by=[{vertex, label=$w$ }]w }];

  \draw [path, P, name path=P] (s) -- (h);
  \draw [path, Q, name path=Q] (h) -- (v);
  \draw [path] (v) -- (v-s' hub) -- (s');
  \draw [path, R] (h) -- node {$R$} (w);
  \draw [path] (w) -- (s');
  \draw [non-existent path, P', name path=P'] (s) -- (h');
  \draw [non-existent path, P'] (h') -- (h);

  \path [name path=mid] (60:3) +(-3,0) -- +(3,0);

  \path [name intersections={of=mid and P , by=P }] (P)  +( 1.33em,0) node [path label, P ] {$P$};
  \path [name intersections={of=mid and Q , by=Q }] (Q)  +(-1.33em,0) node [path label, Q ] {$Q$};
  \path [name intersections={of=mid and P', by=P'}] (P') +(-1.33em,0) node [path label, P'] {$P'$};

  \draw [dashed, >=stealth, ->] (-1,0) -- node [above, sloped] {rank} (-1,0 |- w);
\end{tikzpicture}%
  \caption{A possible pickup or dropoff at vertex~$v$ inserted between the consecutive stops $s$ and $s'$.}
  \label{fig:bch-correctness-proof}
\end{figure}
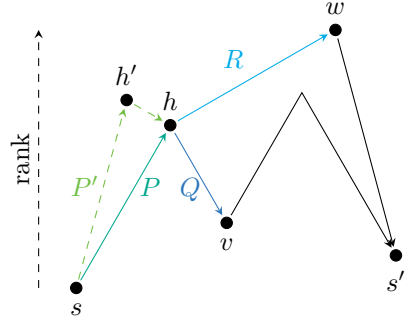

\begin{proof}
  Assume that (\ref{enum:prop-1}) holds and let $v$ be a vertex with $\Dist(s, v) + \Dist(v, s') \le \lambda$ (see \cref{fig:bch-correctness-proof} for an illustration). We say that a path~$P$ is \emph{higher} than a path~$Q$ if $\max_{w \in P} \Rank(w) > \max_{w \in Q} \Rank(w)$. Let $h$ be the highest-ranked vertex on a highest of the shortest $s$--$v$ paths. By construction, there is a shortest $s$--$h$ path~$P$ containing only upward edges and a shortest $h$--$v$ path~$Q$ containing only downward edges, and hence $P \cdot Q$ is an up-down path. We have
  \begin{equation*}
    \begin{split}
      d_s(h) + \Dist(h, s') &= \Dist(s, h) + \Dist(h, s')\\
                            &\le \Dist(s, v) + \Dist(v, s') \le \lambda,
    \end{split}
  \end{equation*}
  where the equality follows from the fact that $P$ contains only upward edges, the first inequality comes from the triangle inequality~$\Dist(h, s') \le \Dist(h, v) + \Dist(v, s')$, and the second inequality uses the definition of $v$. Then $(s, d_s(h)) \in \SourceBucket(h)$ by (\ref{enum:prop-1}), and a reverse BCH search from $v$ finds the shortest $s$--$v$ path~$P \cdot Q$.

  Assume conversely that (\ref{enum:prop-2}) holds and let $h$ be a vertex such that $h$ is the highest-ranked vertex on all shortest $s$--$h$ paths and $d_s(h) + \Dist(h, s') \le \lambda$. By construction, there is a shortest $s$--$h$ path $P$ containing only upward edges. We have
  \begin{equation*}
    \Dist(s, h) + \Dist(h, s') = d_s(h) + \Dist(h, s') \le \lambda,
  \end{equation*}
  where the equality follows from the fact that $P$ contains only upward edges and the inequality uses the definition of $h$. Then, by proposition (\ref{enum:prop-2}), a reverse BCH search from $h$ finds a shortest $s$--$h$ path, i.e., there is a shortest $s$--$h$ path~$P'$ that is an up-down path with highest-ranked vertex~$h'$ and $(s, d_s(h')) \in \SourceBucket(h')$. We have
  \begin{equation*}
    \Rank(h) \le \Rank(h') \le \Rank(h),
  \end{equation*}
  where the first inequality uses the fact that $h'$ is the highest-ranked vertex on $P'$ and the second inequality follows from $h$ being the highest-ranked vertex on all shortest $s$--$h$ paths. Thus $h' = h$ and $(s, d_s(h)) \in \SourceBucket(h)$, which completes the proof.
\end{proof}

\subsubsection*{Bucket Entry Generation.}

To exploit \cref{thm:elliptic-pruning} in practice, we need an algorithm that can efficiently check the conditions (a) and (b). Recall that with standard BCHs, we generate source bucket entries~$(s, d_s(h))$ by running a forward CH search from $s$ and inserting, for each vertex~$h$ settled, an entry~$(s, d_s(h))$ into $\SourceBucket(h)$ (the case of target bucket entries is symmetric). To check condition~(b), we need the distance~$\Dist(h, s')$ for each vertex~$h$ in the search space of the forward search. We propose the following approach.

We run a \emph{topological} forward CH search from $s$, i.e., we process vertices in topological order rather than in increasing order of distance. We prune the search at any vertex with a distance label greater than $\lambda(s, s')$ but do not apply stall-on-demand. The search stops when the priority queue becomes empty. Afterwards, we run a standard reverse CH search from $s'$. We apply stall-on-demand and stop the search as soon as the minimum key in its priority queue exceeds $\lambda(s, s')$. Finally, we need to propagate the distance labels of the reverse search down into the search space of the forward search.

We push each vertex settled during the forward search onto a stack. After the reverse search has terminated, we repeatedly pop a vertex~$u$ from the stack. For each upward edge~$(u, u')$ going out of $u$, we set $d_{s'}(u) = \min\{d_{s'}(u), \ell(u, u') + d_{s'}(u')\}$. We claim that when the stack becomes empty, we have $d_{s'}(h) = \Dist(h, s')$ for each vertex~$h$ in the search space of the forward search with $d_s(h) + \Dist(h, s') \le \lambda(s, s')$, and thus can check condition~(b).

\begin{lemma}
  \label{thm:generation-correctness}
  When the algorithm terminates, we have $d_{s'}(h) = \Dist(h, s')$ for each vertex~$h$ in the search space of the forward search with $d_s(h) + \Dist(h, s') \le \lambda(s, s')$.
\end{lemma}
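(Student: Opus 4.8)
The plan is to prove the claim by backward induction on the rank of $h$, matching the order in which the stack pops vertices. Since the forward search processes vertices in topological order and pushes each settled vertex onto the stack, the pops happen in reverse topological order, i.e. from higher rank to lower rank. I will show the invariant: at the moment a vertex $u$ is popped and its upward edges have been relaxed, $d_{s'}(u) = \Dist(u, s')$ provided $d_s(u) + \Dist(u, s') \le \lambda$.

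First I would establish the base-like observation that a shortest $u$--$s'$ path decomposes, by the structure of CH, as an up-down path: an upward segment from $u$ to some highest-ranked vertex $w$, followed by a downward segment from $w$ to $s'$. The downward segment from $w$ to $s'$ is exactly what a reverse CH search from $s'$ explores, so after the reverse search terminates we have $d_{s'}(w) = \Dist(w, s')$ — here I must check that $w$ was actually settled by the reverse search and not pruned. Pruning in the reverse search happens only when the minimum key exceeds $\lambda$; since $\Dist(w, s') \le \Dist(u, s') \le \lambda$ (the first inequality because $w$ lies on a shortest $u$--$s'$ path, so the $w$--$s'$ suffix is no longer than the whole), $w$ is indeed settled. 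Similarly, every vertex on the upward segment from $u$ to $w$ has rank between $\Rank(u)$ and $\Rank(w)$ and distance to $s'$ bounded by $\lambda$, and each such vertex also lies in the forward search space (it is on an upward path out of $u$, and $u$ is in the forward search space with small enough forward label — this needs the elliptic bound on the forward side too), hence each is on the stack.

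Now the inductive step: let $u$ be popped, let $P = \langle u = u_0, u_1, \dots, u_m = w \rangle \cdot Q$ be a shortest $u$--$s'$ up-down path with $w$ the apex. The first edge $(u_0, u_1)$ is upward, so when we relax $u$'s upward edges we set $d_{s'}(u) \le \ell(u, u_1) + d_{s'}(u_1)$. The vertex $u_1$ has strictly higher rank than $u$, so it was popped earlier; it satisfies $d_s(u_1) + \Dist(u_1, s') \le \lambda$ (the forward label only grows along the upward path and the distance-to-$s'$ only shrinks), so by the induction hypothesis $d_{s'}(u_1) = \Dist(u_1, s')$. Since $(u_0, \dots, u_m) \cdot Q$ is a shortest path, $\ell(u, u_1) + \Dist(u_1, s') = \Dist(u, s')$, giving $d_{s'}(u) \le \Dist(u, s')$; the reverse inequality is immediate because every value ever assigned to $d_{s'}(u)$ corresponds to the length of some genuine $u$--$s'$ walk. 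This closes the induction, and applying it to an arbitrary $h$ in the forward search space with $d_s(h) + \Dist(h, s') \le \lambda$ yields the lemma.

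The main obstacle I anticipate is the bookkeeping needed to guarantee that every vertex appearing on the apex-to-$s'$ suffix and on the $u$-to-apex upward prefix of the relevant shortest paths is actually explored — the reverse CH search by $s'$ for the downward parts, and the forward topological search by $s$ (hence the stack) for the upward parts — despite both searches being pruned at threshold $\lambda$. This requires carefully arguing that along a shortest up-down $h$--$s'$ path, the forward distance from $s$ is monotonically nondecreasing on the upward prefix and the distance to $s'$ is monotonically nonincreasing, so that the product-of-distances bound $d_s(\cdot) + \Dist(\cdot, s') \le \lambda$ is inherited by every intermediate vertex; only then does neither pruning rule discard a vertex we need. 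The rest is routine min-relaxation reasoning and the standard CH up-down path decomposition.
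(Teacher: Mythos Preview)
Your proposal is correct and follows essentially the same approach as the paper: both decompose a shortest $h$--$s'$ path into an upward prefix $h \to w$ and a downward suffix $w \to s'$, argue that the reverse CH search from $s'$ correctly labels the apex $w$, and then show the propagation phase carries that label down along the upward prefix because every vertex on it was pushed onto the stack (the paper states this directly, you phrase it as a backward induction on rank). Your treatment is in fact slightly more careful than the paper's about verifying that neither the forward nor the reverse search prunes a needed intermediate vertex, though your informal justification ``the forward label only grows \dots\ the distance-to-$s'$ only shrinks'' should really be the inequality $d_s(u_1) + \Dist(u_1, s') \le d_s(u) + \ell(u,u_1) + \Dist(u,s') - \ell(u,u_1) = d_s(u) + \Dist(u,s')$.
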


\begin{proof}
  Consider one such $h$ in particular and let $w$ be the highest-ranked vertex on a shortest $h$--$s'$ path (see \cref{fig:bch-correctness-proof}). The reverse CH search is guaranteed to find a shortest $w$--$s'$ path and to set $d_{s'}(w)$ to its correct value (as shown by Geisberger et al.~\cite{GeisbergerSSV12}). All we need to show is that the propagation phase finds a shortest $h$--$w$ path.

  By construction, there is a shortest $h$--$w$ path~$R$ containing only upward edges. Since $h$ is by definition in the search space of the forward search, $R$ contains only upward edges, and the distance label of each vertex on $R$ is by definition at most $\lambda(s, s')$, all vertices on $R$ are pushed onto the stack. Since the forward search settles vertices in topological order, the stack contains the vertices in the order in which they appear on $R$. Hence, the propagation phase relaxes the edges on $R$ in reverse order and thus finds the $h$--$w$ path~$R$.
\end{proof}

It remains to check condition~(a). Consider a vertex~$h$ in the search space of the forward search and let $P$ be a shortest of the $s$--$h$ paths that contain only upward edges. Condition~(a) is violated if and only if there is an up-down $s$--$h$ path~$P'$ with at least one downward edge and $\ell(P') \le \ell(P)$; see \cref{fig:bch-correctness-proof}. We try to find such \emph{witnesses} during the propagation phase.

When we pop $h$ from the stack, we additionally look at all downward edges~$(h'', h)$ coming into $h$ and compute $\mu = \min_{(h'', h)} d_s(h'') + \ell(h'', h)$. If $\mu \le d_s(h)$, we found a witness, condition~(a) is violated, and thus we do not insert an entry into $\SourceBucket(h)$. Either way, we set $d_s(h) = \min\{d_s(h), \mu\}$. Note that we find a witness if and only if all vertices on it are contained in the search space of the forward search. Therefore, we do not necessarily discover all violations of condition~(a). However, we observed that in practice undiscovered violations are quite rare. More importantly, undiscovered violations may yield superfluous bucket entries but do not affect the correctness of the BCH searches.

\subsubsection*{Bucket Entry Removal.}

Whenever a vehicle completes a stop, we have to remove the bucket entries referring to this stop. In the following, we show how to efficiently remove the source bucket entries that refer to a stop~$s$. The case of target bucket entries is symmetric.

We initialize both a set~$R$ of reached vertices and a queue~$Q$ with the location~$l(s)$ of $s$. While $Q$ is not empty, we extract a vertex~$v$ from the queue and scan its source bucket~$\SourceBucket(v)$. When we find an entry~$(s, d_s(v))$ referring to $s$, we remove $(s, d_s(v))$ from $\SourceBucket(v)$, stop the scan, look at each upward edge~$(v, w)$ out of $v$, and insert $w$ into both $R$ and $Q$ if $w \notin R$.

The algorithm finds an entry~$(s, d_s(w)) \in \SourceBucket(w)$ if and only if there is an $s$--$w$ path~$P$ such that $P$ contains only upward edges and $(s, d_s(v)) \in \SourceBucket(v)$ for each vertex~$v$ on $P$. There would always be such a path~$P$ if we were able to guarantee to discover all violations of condition~(a). Since we cannot, we explicitly ensure that there is always such a path~$P$. Whenever we insert an entry into a source bucket~$\SourceBucket(w)$, we also insert a corresponding entry into $\SourceBucket(\Parent(w))$, where $\Parent(w)$ is the parent pointer of $w$ computed by the forward search. Our experiments will show that this almost never inserts additional bucket entries.

\subsubsection*{Implementation Details.}

Bucket entries must identify the stop they refer to. Therefore, we maintain an initially empty list of free stop IDs. Whenever we insert a stop into a vehicle's route, we take an ID from the list and assign it to the new stop. If the list is empty, we set the ID of the new stop to the maximum stop ID assigned so far plus one. Whenever we remove a stop from a route, we insert its ID into the list of free stop IDs. Bucket entries are stored and maintained in a way similar to how we handle stop attribute values.

\subsection{Shortest-Path Searches for Special Cases.}
\label{sec:special-cases}

We use BCHs to obtain most of the shortest-path distances needed to compute insertion costs. However, three special cases have to be treated separately. We discuss each of them in this section.

\subsubsection*{From Vehicles to Pickup.}

First, consider an insertion~$(\nu, r, i, j)$ with $R(\nu) = \langle s_0, \dots, s_k \rangle$ and $0 = i < k$. Here, the new pickup is inserted before the next scheduled stop on a vehicle's route. In this case, the vehicle is immediately diverted to the new pickup. To compute the cost of the insertion, we need the shortest-path distance~$\Dist(\CurrentLocation(\nu), p(r))$ from the current location~$\CurrentLocation(\nu)$ of the vehicle~$\nu$ to the pickup spot~$p(r)$. Note that our BCH searches do not find shortest paths from the vehicle's current location. Since the current location changes continuously, we cannot precompute bucket entries for it. However, the BCH searches provide us with a lower bound on the actual pickup detour.

The travel time from $s_0$ to $s_1$ via pickup spot~$p(r)$ is $\Dist(s_0, \CurrentLocation(\nu)) + \Dist(\CurrentLocation(\nu), p(r)) + \Dist(p(r), s_1)$. The inequality~$\Dist(s_0, p(r)) \le \Dist(s_0, \CurrentLocation(\nu)) + \Dist(\CurrentLocation(\nu), p(r))$ then yields a lower bound of $\Dist(s_0, p(r)) + \Dist(p(r), s_1)$ on the travel time from $s_0$ to $s_1$ via $p(r)$. Since we have source bucket entries for $s_0$ and target bucket entries for $s_1$, this lower bound can be obtained from the BCH searches. We can then compute lower bounds on the pickup detour and finally on the cost of the insertion. Only in the rare case that the latter lower bound is better than the best insertion seen so far, we have to compute the exact shortest-path distance~$\Dist(\CurrentLocation(\nu), p(r))$ by running a standard CH query.

\subsubsection*{From Last Stops to Pickup.}

Next, consider an insertion~$(\nu, r, i, j)$ with $R(\nu) = \langle s_0, \dots, s_k \rangle$ and $i = k$. Here, the new pickup is inserted after the last stop on a vehicle's route. Observe that this case also covers currently idle vehicles. To compute the cost of such insertions, we need the shortest-path distance~$\Dist(s_k, p(r))$ from the last stop~$s_k$ to the pickup spot~$p(r)$. However, our BCH searches do not find shortest paths from the last stop. The reason is that we do not generate source bucket entries for the last stop, since we cannot apply elliptic pruning in this case (the leeway is unbounded).

Instead, we defer all possible insertions~$(\nu, r, i, j)$ with $R(\nu) = \langle s_0, \dots, s_k \rangle$ and $i = k$. After having tried all possible candidate insertions~$(\nu', r, i', j')$ with $R(\nu') = \langle s'_0, \dots, s'_{k'} \rangle$ and $j' \ne k'$, we perform a reverse Dijkstra search from $p(r)$. Whenever we settle the last stop of a vehicle~$\nu$ with $R(\nu) = \langle s_0, \dots, s_k \rangle$, we check whether the insertion~$(\nu, r, k, k)$ improves the currently best insertion. Note that the detour (i.e., the increase in operation time) for each such insertion is $\Detour = \Dist(s_k, p(r)) + \StopTime + \Dist(p(r), d(r)) + \StopTime$, and thus its cost is at least $\Detour$. Therefore, we can stop the search when the sum of the minimum key~$\kappa$ in its priority queue and $\StopTime + \Dist(p(r), d(r)) + \StopTime$ is at least as large as the cost of the best insertion found so far. We can do even better by taking into account lower bounds on the violations of the wait and trip time constraint. More precisely, we can stop the search as soon as the sum
\begin{equation*}
  \begin{split}
    \kappa &+ \StopTime + \Dist(p(r), d(r)) + \StopTime\\
           &+ \WaitViolationWeight
               \cdot \max\{\kappa + \StopTime - \MaxWaitTime, 0\}\\
           &+ \TripViolationWeight
               \cdot \max\{\kappa + \StopTime + \Dist(p(r), d(r)) - \MaxTripTime(r), 0\}
  \end{split}
\end{equation*}
is at least as large as the cost of the currently best insertion. Stopping the Dijkstra search early makes it practical and fast enough for real-time applications.

\subsubsection*{From Last Stops to Dropoff.}

Lastly, consider a candidate insertion~$(\nu, r, i, j)$ with $R(\nu) = \langle s_0, \dots, s_k \rangle$ and $i < j = k$. Here, the new pickup is inserted before and the new dropoff is inserted after the last stop on a vehicle's route. To compute the cost of that insertion, we need the shortest-path distance~$\Dist(s_k, d(r))$ from the last stop~$s_k$ to the dropoff spot~$d(r)$. As discussed before, our BCH searches do not find shortest paths from the last stop. Instead, we treat this special case similarly to the previous one.

After running a reverse Dijkstra search from $p(r)$, we also run one from $d(r)$. Whenever we settle the last stop of a vehicle~$\nu$ with $R(\nu) = \langle s_0, \dots, s_k \rangle$, we check whether any insertion~$(\nu, r, i, k)$ with $i < k$ improves the best insertion seen so far. Since the cost of each such insertion is at least $\Dist(s_k, d(r)) + \StopTime$, we can stop the search when the sum of the minimum key~$\kappa$ in its priority queue and $\StopTime$ is at least as large as the cost of the currently best insertion. Again, we can do better by taking into account a lower bound on the violation of the request's trip time constraint. Then, we can stop the search as soon as the sum
\begin{equation*}
  \kappa + \StopTime + \TripViolationWeight \cdot \max\{\StopTime + \kappa - \MaxTripTime(r), 0\}
\end{equation*}
is as large as the cost of the best insertion found so far.

\subsection{Putting Everything Together.}
\label{sec:putting-everything-together}

In this section we assemble the basic LOUD algorithm from the building blocks introduced in the preceding sections. Given a ride request~$r = (p, d, \MinDepTime)$, the algorithm inserts it into any vehicle's route such that the vehicle's detour plus the violations of the soft constraints (if any) is minimized. A request is resolved in four phases, and we explain each in turn. In addition, \cref{algo:resolve-request} gives high-level pseudocode for each phase.

\begin{algorithm2e*}[tbp]
  \caption{Routine for resolving a received ride request~$r = (p, d, \MinDepTime)$.}
  \label{algo:resolve-request}

  \BeginBox
  run a CH query from pickup $p$ to dropoff $d$
      \BoxHeading{r}{KITpalegreen}{Computing Shortest-Path Distances}
  $\MaxDepTime(r) \gets \MinDepTime(r) + \MaxWaitTime$\;
  $\MaxArrTime(r) \gets \MinDepTime(r) + \alpha \cdot \Dist(p, d) + \beta$\;
  run forward and reverse BCH searches from pickup spot $p$ and dropoff spot $d$\;

  \EndBoxBeginBox{KITpalegreen}
  let $\hat{\iota} = \smash{(\hat{\nu}, r, \hat{i}, \hat{j})} \gets \bot$
      be the best insertion found so far\BoxHeading{r}{KITcyanblue}{Trying Ordinary Insertions}
  \ForEach{vehicle $\nu \in C$}{
    let $\langle s_0, \dots, s_k \rangle$ be the route of vehicle $\nu$\;
    \For{$i \gets 1$ \KwTo $k - 1$}{
      \lIf{$o(s_i) = c(\nu)$}{continue}
      try to improve $\hat{\iota}$ with insertion $(\nu, r, i, i)$\;
      \For{$j \gets i + 1$ \KwTo $k - 1$}{
        \If{$o(s_j) = c(\nu)$}{
          \If{$l(s_j) = d$}{
            try to improve $\hat{\iota}$ with insertion $(\nu, r, i, j)$\;
          }
          break\;
        }
        try to improve $\hat{\iota}$ with insertion $(\nu, r, i, j)$\;
      }
    }
  }

  \EndBoxBeginBox{KITcyanblue}
  \ForEach(\BoxHeading{f}{KITblue}{Trying Special-Case Insertions}){vehicle $\nu \in C$}{
    try to improve $\hat{\iota}$ with any insertion $(\nu, r, 0, j)$ with $0 \le j < |R(\nu)| - 1$\;
  }
  search for insertions better than $\hat{\iota}$ that insert the pickup at the end of a route\;
  search for insertions better than $\hat{\iota}$ that insert the dropoff at the end of a route\;

  \EndBox{KITblue}
  \lIf{no feasible insertion has been found}{\KwRet{$\bot$}}

  \BeginBox
  let $\langle s_0, \dots, s_k \rangle$ be the route of vehicle $\hat{\nu}$
      \BoxHeading{r}{KITlilac}{Updating Preprocessed Data}
  $\langle s'_0, \dots, s'_{i'} = p, \dots, s'_{j'} = d, \dots, s'_{k'} \rangle \gets$
      perform insertion $\hat{\iota}$\;

  \If{vehicle $\hat{\nu}$ is diverted while driving from $s_0$ to $s_1$}{
    remove source bucket entries for stop $s'_0$\;
    $l(s'_0) \gets \CurrentLocation(\hat{\nu})$\;
    $\MinDepTime(s'_0) \gets$ current point in time\;
    generate source bucket entries for stop $s'_0$\;
  }

  \If{the pickup is not inserted at an existing stop}{
    generate source and target bucket entries for stop $s'_{i'}$\;
  }
  \If{the dropoff is not inserted at an existing stop}{
    generate target bucket entries for stop $s'_{j'}$\;
    \eIf{the dropoff is inserted before the last stop}{
      generate source bucket entries for stop $s'_{j'}$\;
    }{
      generate source bucket entries for stop $s_k$\;
    }
  }

  \If{$l(s_k) \ne l(s'_{k'})$}{
    remove $\hat{\nu}$ from the list of vehicles that terminate at $l(s_k)$\;
    insert $\hat{\nu}$ into the list of vehicles that terminate at $l(s'_{k'})$\;
  }

  \EndBox{KITlilac}
  \KwRet{$\hat{\iota}$}\;
\end{algorithm2e*}

\subsubsection*{Computing Shortest-Path Distances.}

We start by computing the shortest-path distance from the pickup~$p$ to the dropoff~$d$ with a standard CH query. From this distance, we compute the latest time~$\MaxDepTime(r)$ when $r$ should be picked up as well as the latest time~$\MaxArrTime(r)$ when $r$ should be dropped off. Next, we compute all shortest-path distances that we need to calculate the costs of all \emph{ordinary} insertions, i.e., insertions~$(\nu, r, i, j)$ with $0 < i \le j < |R(\nu)| - 1$. We do this by running two forward BCH searches (from $p$ and $d$) that scan the target buckets, and two reverse BCH searches (from $p$ and $d$) that scan the source buckets.

\subsubsection*{Trying Ordinary Insertions.}

Next, we try all possible ordinary insertions. To do so, we look at the set~$C$ of vehicles that have been seen while scanning the buckets (recall that we store in each bucket entry the identifier of the vehicle to which the entry belongs). Note that vehicles that are not contained in $C$ allow no feasible ordinary insertions, and thus we do not have to consider them during this phase of the algorithm.

For each vehicle~$\nu \in C$, we enumerate all ordinary insertions that satisfy the capacity constraints, using the occupancy values~$o(\cdot)$ that we computed in \cref{sec:maintaining-feasibility}. Let $\langle s_0, \dots, s_k \rangle$ be the route of $\nu$. We loop over all pickup insertion points~$i$, $0 < i < k$, in increasing order. If the number~$o(s_i)$ of occupied seats when $\nu$ departs from $s_i$ is equal to the capacity~$c(\nu)$ of $\nu$, then all insertions~$(\nu, r, i, \cdot)$ are infeasible, and we continue with the next pickup insertion point. Otherwise, we loop over all dropoff insertion points~$j$, $i \le j < k$, in increasing order. If $o(s_j) < c(\nu)$, then the insertion~$(\nu, r, i, j)$ satisfies the capacity constraints. Otherwise, all insertions~$(\nu, r, i, \ell)$ with $\ell > j$ are infeasible, and we continue with the next pickup insertion point. The insertion with $\ell = j$ satisfies the constraints only if $d$ coincides with $s_j$.

For each insertion~$\iota$ satisfying the capacity constraints, we check whether the remaining hard constraints are also satisfied and compute the insertion cost according to \cref{eq:objective-value}. This can be done in constant time using the subroutine we introduced in \cref{sec:maintaining-feasibility}. Finally, if $\iota$ improves the best insertion~$\hat{\iota}$ found so far, we update $\hat{\iota}$ accordingly.

\subsubsection*{Trying Special-Case Insertions.}

Next, we try all possible special-case insertions, i.e., insertions whose cost depends on some shortest-path distances not computed by the BCH searches. First, we try all insertions~$(\nu, r, 0, j)$ with $0 \le j < |R(\nu)| - 1$. Such insertions insert the pickup before the next scheduled stop on a vehicle's route. Since vehicles~$\nu' \notin C$ allow no feasible insertions~$(\nu', r, 0, j)$ with $0 \le j < |R(\nu')| - 1$, it suffices to look at each vehicle~$\nu \in C$. Let $\langle s_0, \dots, s_k \rangle$ be the route of $\nu$. If $o(s_0) = c(\nu)$, then $\nu$ is currently fully occupied, and thus we cannot pick up another request before the next scheduled stop. If $o(s_0) < c(\nu)$, then we loop over all dropoff insertion points~$j$, $0 \le j < k$, terminating the loop when $o(s_j) = c(\nu)$. For each $j$, we handle the insertion~$(\nu, r, 0, j)$ as described in \cref{sec:special-cases}.

Second, we search for insertions better than $\hat{\iota}$ that insert both the pickup and the dropoff after the last stop on a vehicle's route. We do this by performing a reverse Dijkstra search from $p$, as discussed in \cref{sec:special-cases}. Finally, we search for insertions better than $\hat{\iota}$ that insert only the dropoff after the last stop on a vehicle's route. To do that, we run a reverse Dijkstra search from $d$, as described also in \cref{sec:special-cases}.

\subsubsection*{Updating Preprocessed Data.}

If we have found a feasible insertion, we need to update the preprocessed data in order to be ready to receive and resolve the next ride request. We start by actually \emph{performing} the best insertion~$\hat{\iota} = (\hat{\nu}, r, \hat{i}, \hat{j})$ into the current route~$\langle s_0, \dots, s_k \rangle$ of $\hat{\nu}$. Let $\langle s'_0, \dots, s'_{i'} = p, \dots, s'_{j'} = d, \dots, s'_{k'} \rangle$ be the route of $\hat{\nu}$ after the insertion. The $\MinDepTime$, $\MaxArrTime$, and $o$~values can be updated in time linear in the length of the route.

If $\hat{\nu}$ is diverted while driving from $s_0$ to $s_1$, we update the start~$s'_0$ of its current leg and recompute the source bucket entries for $s'_0$. (Note that there are no target bucket entries for $s'_0$ because it is the first stop on the route.) First, we remove the current source bucket entries for $s'_0$. Then, we set the location of $s'_0$ to the current location of $\hat{\nu}$, and the departure time at $s'_0$ to the current point in time. Finally, we generate new source bucket entries for stop~$s'_0$.

Moreover, we generate source and target bucket entries for the stop~$s'_{i'}$ at which the pickup is made unless the pickup is inserted at an existing stop. Likewise, we generate target bucket entries for the stop~$s'_{j'}$ at which the dropoff is made unless the dropoff is inserted at an existing stop. If the dropoff is inserted before the last stop, we also generate source bucket entries for $s'_{j'}$. Otherwise, we generate source bucket entries for the stop~$s_k$ that was at the very end of the route before performing the insertion. (Note that whenever a vehicle reaches the next scheduled stop on its route, we remove the target bucket entries for this stop, and the source bucket entries for the preceding stop.)

It remains to update one more data structure. For each vertex~$v$, we maintain a list of vehicles that terminate at $v$, i.e., whose currently last stop is made at $v$. Whenever the reverse Dijkstra searches from $p$ and $d$ settle a vertex~$v$, they retrieve the last stops at $v$ with these lists. Therefore, we remove $\hat{\nu}$ from the list of vehicles terminating at $l(s_k)$, and we insert $\hat{\nu}$ into the list of vehicles terminating at $l(s'_{k'})$.

\subsection{Extensions.}
\label{sec:extensions}

This section shows how LOUD can be extended to meet additional requirements of real-world production systems. We explain each extension in turn, but they can be combined in an actual implementation. Our implementation supports all of them.

\subsubsection*{Edge-Based Stops.}

Up to now, we have assumed that stops are made at vertices (i.e., intersections). In real-world applications, however, stops are made anywhere along edges (i.e., road segments). Fortunately, LOUD can be easily extended to work with edge-based stops, following the approach proposed by Delling et al.~\cite{DellingGPW17}.

Consider a stop~$s$ along an edge~$e = (v, w)$ with a real-valued offset~$o \in [0, 1]$. To run a forward search (whether it is a Dijkstra, CH, or BCH search) from $s$, we start from the \emph{head} vertex~$w$ and initialize the distance label~$d_w(w)$ to $(1 - o) \cdot \ell(e)$ rather than zero. Likewise, to run a reverse Dijkstra, CH, or BCH search from $s$, we start from the \emph{tail} vertex~$v$ and initialize the distance label~$d_v(v)$ to $o \cdot \ell(e)$. The special case where source and target are located on the same edge is treated explicitly.

\subsubsection*{Path Retrieval.}

In real-world applications, one is often interested not only in the best insertion~$(\nu, r, i, j)$ but also in the descriptions of the paths from stop~$s_i$ to the pickup spot~$p(r)$, from $p(r)$ to stop~$s_{i + 1}$, from stop~$s_j$ to the dropoff spot~$d(r)$, and from $d(r)$ to stop~$s_{j + 1}$. By maintaining a parent pointer for each vertex, the \mbox{Dijkstra} searches can retrieve complete path descriptions, and the CH searches can retrieve descriptions potentially containing shortcuts. The latter can be unpacked into complete descriptions in time linear in the number of edges on the unpacked path~\cite{GeisbergerSSV12}.

Now, consider a path~$\langle s, \dots, h, \dots, s' \rangle$ found by a forward BCH search. The case of a reverse BCH search is symmetric. Let $h$ be the highest-ranked vertex on the path. Since the $s$--$h$ path is found by a forward CH search, its description can be retrieved as discussed above. The $h$--$s'$ path, however, is hidden behind the target bucket entry~$(s', d_{s'}(h)) \in \TargetBucket(h)$. Therefore, it remains to retrieve the path description that corresponds to a target bucket entry.

When we generate target bucket entries for $s'$, we could explicitly store the search space of $s'$ as a rooted tree~$T_{s'}$. To retrieve the description of the $h$--$s'$ path, we would traverse the path in $T_{s'}$ from $h$ to $s'$. Note, however, that to find a best insertion, we need no parent information. That is, $T_{s'}$ is only needed when we insert a new stop immediately before $s'$, which may never be the case. Since it seems wasteful to build a tree that may never be used, we instead retrieve the path description corresponding to a target bucket entry~$(s', d_{s'}(h))$ by running a reverse CH search (from $s'$ to $h$) when needed.

\subsubsection*{Handling Traffic.}

Today's ridesharing services have to be able to quickly update the routing graph whenever new traffic information is available. On large-scale road networks, however, CH preprocessing is not fast enough to incorporate a continuous stream of traffic information. Hence, we propose combining LOUD with \emph{customizable} contraction hierarchies (CCHs)~\cite{DibbeltSW16}, a CH variant that can incorporate new metrics in few seconds. As a customizable contraction hierarchy \emph{is a} contraction hierarchy, LOUD can be used as is with CCHs, without the need for further modifications.

We can do better by replacing the Dijkstra-based CH searches with elimination tree searches, a query algorithm tailored to CCHs. Elimination tree searches tend to be faster than Dijkstra-based searches for point-to-point queries, however, they have one drawback. Since they do not process vertices in increasing order of distance, it is not clear how to terminate them early. This is an issue because the Dijkstra-based CH searches during bucket entry generation have a tight stopping criterion. However, we observe that we can turn \emph{stopping} criteria for Dijkstra-based CH searches into \emph{pruning} criteria for elimination tree searches.

During bucket entry generation, the Dijkstra-based CH searches stop as soon as they settle a vertex whose distance label exceeds the leeway. We cannot \emph{stop} an elimination tree search at such a vertex~$v$. However, we can \emph{prune} the search at $v$, i.e., we do not relax edges out of $v$. As shown by Buchhold et al.~\cite{BuchholdSW19}, the edge relaxations are the time-consuming part, whereas the time spent on elimination tree traversal is negligible.

Note that elimination tree searches even simplify bucket entry generation. In \cref{sec:elliptic-pruning}, we have introduced special \emph{topological} CH searches. Since elimination tree searches process vertices in ascending rank order, and the rank order is a topological order, each standard elimination tree search is already a topological search.

There is, however, a potential pitfall associated with customization. Recall that to remove bucket entries for a stop~$s$, we essentially simulate a CH search from $s$ to find the buckets that contain entries referring to $s$. This requires that the topology of the hierarchy does not change between generation and removal of the bucket entries for $s$. Fortunately, CCHs compute a metric-independent contraction order during a preprocessing step, i.e., customization does not affect the order. Thus, when using \emph{basic} CCH customization~\cite{DibbeltSW16}, the topology does not change, and we can safely update the edge costs between bucket entry generation and removal.

For even smaller search spaces, we can apply a more sophisticated customization algorithm (\emph{perfect} customization~\cite{DibbeltSW16}). This additionally removes superfluous edges from the hierarchy. Therefore, although the contraction order remains the same, the topology of the hierarchy may change. Hence, when using perfect customization, we have to clear and rebuild the source and target buckets after each customization step.

\subsubsection*{Other Objective Functions.}

Our precise objective function is taken from the popular transport simulation MATSim~\cite{HorniNA16, BischoffMN17}, and can be parameterized as discussed in \cref{sec:problem-statement}. We stress, however, that LOUD is not restricted to this objective function but can work with other functions as well. Note that elliptic pruning (and therefore bucket entry generation) does not depend on the objective function, only on the hard constraints for requests already matched to a vehicle. Hence, it will perform similarly for \emph{any} objective function. The only ingredients that depend on the actual objective function are the stopping criteria for the reverse Dijkstra searches from the received pickup and dropoff spot, respectively.

\section{Experiments.}
\label{sec:experiments}

This section presents a thorough experimental evaluation of LOUD on the state-of-the-art Open Berlin Scenario~\cite{ZiemkeKN19} and a second, even larger benchmark instance, including a comparison to related work. We also integrate LOUD into a transport simulation software.

\subsection{Experimental Setup.}

Our source code is written in C++17 and compiled with the GNU compiler~9.3 using optimization level~3. We use 4-heaps~\cite{Johnson75} as priority queues. To ensure a correct implementation, we make extensive use of assertions (disabled during measurements). Our benchmark machine runs openSUSE Leap~15.2 (kernel~5.3.18), and has 192\,GiB of DDR4-2666 RAM and two Intel Xeon Gold~6144 CPUs, each with eight cores clocked at 3.50\,GHz and $8 \times 64$\,KiB of L1, $8 \times 1$\,MiB of L2, and 24.75\,MiB of shared L3~cache. Note that we consider only single-core implementations.

\subsubsection*{Inputs.}

Our main benchmark instances are taken from the Open Berlin Scenario~\cite{ZiemkeKN19}, a publicly available transport simulation scenario for the Berlin metropolitan area implemented in MATSim~\cite{HorniNA16}. The transport simulation MATSim works in iterations, with each iteration simulating the movement of the given population (including departure time, route, mode and destination choice) and outputting each inhabitant's 24-hour travel pattern. Over the course of iterations, the activity-travel patterns become more and more realistic.

To obtain a set of realistic ride requests, we build on the Open Berlin Scenario 5.5 with demand-responsive transport (DRT). By default, only a few trips use DRT. Therefore, we change three parameters. We halve the DRT fare per kilometer from 35 to 18~cents, halve the minimum DRT fare per trip from 2 to 1~euro, and double the daily cost per private car from 5.30 to 10.60~euros. This primarily replaces private-car trips by DRT trips.

The Open Berlin Scenario has been published in two versions. The 1\,\% scenario simulates 1\,\% of all adults living in Berlin and Brandenburg, while the 10\,\% scenario simulates 10\,\% of them. For our benchmark instance \emph{Berlin-1pct}, we take all DRT requests from the 500th~iteration of the 1\,\% scenario (500 is the number of iterations recommended for realistic travel patterns). For our instance \emph{Berlin-10pct}, we take all DRT requests from the 250th iteration of the 10\,\% scenario (since one iteration takes more than four hours, performing 500 is not feasible). Both instances take the network from the Open Berlin Scenario, which builds on OpenStreetMap.

To evaluate LOUD on even larger instances, we build two additional instances that comprise the Rhine-Ruhr area, the largest metropolitan area in Germany. The construction is guided by the Open Berlin Scenario. We start by taking the network from OpenStreetMap. Besides all roads in the city of Berlin, the Open Berlin Scenario includes all main roads in Brandenburg (the state that surrounds Berlin). For our Rhine-Ruhr instances, we therefore take all roads in the Rhine-Ruhr metropolitan area (as defined by the Landesentwicklungsplan NRW from 1995) and all main roads in the surrounding state of North Rhine-Westphalia.

As for the Open Berlin Scenario, we build a sparser instance \emph{Ruhr-1pct} and a denser instance \emph{Ruhr-10pct}. Since the population in the Rhine-Ruhr area is roughly three times larger than in the Berlin area, we scale the numbers of vehicles and requests for Berlin-1pct (Berlin-10pct) by a factor of three to obtain the numbers for Ruhr-1pct (Ruhr-10pct). The initial locations of the vehicles are uniformly distributed in the Rhine-Ruhr area (no vehicle starts in the surroundings). As the population density correlates with the density of the graph, vehicles tend to start in densely populated areas.

We choose the pickup and dropoff spot for a request as follows. First, we choose the pickup spot uniformly at random from the Rhine-Ruhr area. Next, we draw the trip duration from a geometric distribution with probability parameter~$p = 1 / (\mu + 1)$. Finally, we run Dijkstra's algorithm from the pickup spot until we settle a vertex whose distance label is greater than or equal to the trip duration drawn before \emph{and} that vertex is contained in the Rhine-Ruhr area. The expected trip duration~$\mu$ is set to the average trip duration on the corresponding Berlin instance (12~minutes on Berlin-1pct and 11~minutes on Berlin-10pct).

The earliest departure time for a request is drawn according to the distribution of the earliest departure times on the Berlin instances. More precisely, we group the departure times on the Berlin instances into five-minute bins~$b_i$. To choose the departure time for a request on the Rhine-Ruhr instances, we first draw a bin~$B$ from the discrete distribution determined by the probability function~$\Pr[B = b_i] = |b_i| / \sum_{b_j} |b_j|$, and then choose the departure time uniformly at random from the interval corresponding to $B$. Key figures of the Berlin and Rhine-Ruhr instances are shown in \cref{tab:instances}.

\begin{table}[tb]
  \caption{Key figures of our benchmark instances.}
  \label{tab:instances}
  \begin{tabular*}{\linewidth}{
    l@{\extracolsep{\fill}}S@{\extracolsep{\fill}}S@{\extracolsep{\fill}}S@{\extracolsep{\fill}}S}
  \toprule
  input & {$|V|$} & {$|E|$} & {veh} & {req} \\
  \midrule
  Berlin-1pct  &  73689 & 159039 &  1000 &  16569 \\
  Berlin-10pct &  73689 & 159039 & 10000 & 149185 \\
  \addlinespace
  Ruhr-1pct    & 394049 & 840587 &  3000 &  49707 \\
  Ruhr-10pct   & 394049 & 840587 & 30000 & 447555 \\
  \bottomrule
\end{tabular*}
\end{table}

\subsubsection*{Methodology.}

We implemented a discrete-event simulation that simulates a given set of vehicles servicing a given set of requests. The simulation maintains each vehicle's current state (out of service, idling, driving, or stopping) and an addressable priority queue of pending events. Each event happens at some scheduled point in time and may generate a new event in the future. We repeatedly extract the next event from the queue and process it. The transport simulation stops as soon as the event queue becomes empty.

For each ride request~$r$ in the input, we process a \emph{request receipt event} at $\MinDepTime(r)$. To do so, we match request~$r$ to some vehicle~$\nu$. If $\nu$ is currently idling, we set its state to driving and insert a vehicle arrival event at $\Now + \Dist(\CurrentLocation(\nu), p(r))$ into the queue, where $\Now$ is the current point in time. If vehicle~$\nu$ is currently driving and $r$ is inserted before the next scheduled stop, we update the scheduled time of $\nu$'s existing vehicle arrival event to $\Now + \Dist(\CurrentLocation(\nu), p(r))$.

For each vehicle~$\nu$ in the input, we process a \emph{vehicle startup event} at $\MinServTime(\nu)$ and a \emph{vehicle shutdown event} at $\MaxServTime(\nu)$. To process the former, we check whether there are already any requests matched to $\nu$. If so, we set $\nu$'s state to driving and insert a vehicle arrival event into the queue. Otherwise, we set the state to idling and generate no new event. To process the vehicle shutdown event, we set $\nu$'s state to out of service and notify the dispatching algorithm about the vehicle shutdown. Note that all request receipt, vehicle startup and vehicle shutdown events are known in advance and form the initial content of the event queue.

Whenever a vehicle~$\nu$ reaches a stop, we process a \emph{vehicle arrival event}. To do so, we set $\nu$'s state to stopping and add a vehicle departure event at $\Now + \StopTime$ to the queue. Moreover, we notify the dispatching algorithm about the vehicle arrival so that $\nu$'s route (and preprocessed data) can be updated. Finally, whenever a vehicle~$\nu$ is ready to depart from a stop, we process a \emph{vehicle departure event}. To do so, we check whether there are currently any ride requests matched to $\nu$. If so, we set its state to driving and insert a vehicle arrival event into the queue. Otherwise, we set the state to idling and generate no new event.

\begin{table*}[tb]
  \caption{Bucket entry generation on various benchmark instances with standard and customizable CHs. We report the total number of vertices~$v$ in the search space of a newly inserted stop~$s$ with neighboring stop~$s'$. We also report those that are the highest-ranked vertex on all shortest paths between $s$ and $v$ (i.e., satisfy condition~(a)), those that lie inside the shortest-path ellipse around $s$ and $s'$ (i.e., satisfy condition~(b)), and those that satisfy both conditions. Moreover, we report the number of bucket entries inserted, the running time for the search from the new stop, the search from its neighbor, the propagation of distance labels, and the total running time.}
  \label{tab:bucket-entry-generation}
  \begin{tabular*}{\linewidth}{
    l@{\extracolsep{\fill}}l@{\extracolsep{\fill}}r@{\extracolsep{\fill}}r@{\extracolsep{\fill}}
    r@{\extracolsep{\fill}}r@{\extracolsep{\fill}}r@{\extracolsep{\fill}}r@{\extracolsep{\fill}}
    r@{\extracolsep{\fill}}r@{\extracolsep{\fill}}r}
  \toprule
  & & \multicolumn{4}{c}{\makebox[0pt]{\#\,vertices in search space}} & \# & \multicolumn{4}{c}{\makebox[0pt]{running time [\si{\micro\second}]}} \\
  \cmidrule{3-6}\cmidrule{8-11}
  input & CH & total & \hspace{-\tabcolsep}highest & \hspace{-\tabcolsep}in ellipse & both & entries & stop & neigh & prop & total \\
  \midrule
  Berlin & std  & 210.37 &  54.54 & 16.90 &  9.87 &  9.87 & 4.33 & 3.61 & 2.24 & 10.17 \\
  1pct   & cust & 186.63 & 136.63 & 15.50 & 12.49 & 12.50 & 2.66 & 2.85 & 2.21 &  7.72 \\
  \addlinespace
  Berlin & std  & 210.64 &  54.66 & 14.05 &  8.72 &  8.73 & 4.03 & 3.35 & 1.99 &  9.37 \\
  10pct  & cust & 186.76 & 136.35 & 13.20 & 10.84 & 10.84 & 2.49 & 2.66 & 1.95 &  7.10 \\
  \addlinespace
  Ruhr   & std  & 241.09 &  54.38 & 15.67 &  8.70 &  8.71 & 3.73 & 3.35 & 2.32 &  9.40 \\
  1pct   & cust & 228.91 & 165.20 & 13.67 & 11.42 & 11.42 & 3.00 & 3.22 & 3.05 &  9.26 \\
  \addlinespace
  Ruhr   & std  & 241.58 &  54.46 & 14.25 &  8.43 &  8.43 & 3.46 & 3.11 & 2.09 &  8.66 \\
  10pct  & cust & 228.91 & 165.26 & 12.88 & 10.90 & 10.90 & 2.77 & 2.95 & 2.53 &  8.25 \\
  \bottomrule
\end{tabular*}
\end{table*}

\begin{table*}[tb]
  \caption{Time (in microseconds) for BCH searches and bucket entry removal on various benchmark instances with standard and customizable CHs. We also report the number of vertices and bucket entries visited during a BCH search and while removing bucket entries referring to a completed stop.}
  \label{tab:bch-searches}
  \begin{tabular*}{\linewidth}{
    l@{\extracolsep{\fill}}l@{\extracolsep{\fill}}S@{\extracolsep{\fill}}S@{\extracolsep{\fill}}
    S@{\extracolsep{\fill}}S@{\extracolsep{\fill}}S@{\extracolsep{\fill}}S}
  \toprule
  & & \multicolumn{3}{c}{\makebox[0pt]{BCH searches}} & \multicolumn{3}{c}{\makebox[0pt]{bucket entry removal}} \\
  \cmidrule{3-5}\cmidrule{6-8}
  input & CH & {\#\,vertices} & {\#\,entries} & {time} & {\#\,vertices} & {\#\,entries} & {time} \\
  \midrule
  Berlin & std  &  62.87 &   564.16 &  14.94 & 25.72 &  149.54 & 1.21 \\
  1pct   & cust & 186.65 &  1331.91 &  16.24 & 46.16 &  293.23 & 1.68 \\
  \addlinespace
  Berlin & std  &  62.94 &  3994.05 &  35.55 & 23.57 &  905.88 & 1.73 \\
  10pct  & cust & 186.66 &  9149.83 &  52.88 & 42.22 & 1764.32 & 2.61 \\
  \addlinespace
  Ruhr   & std  &  65.80 &  1191.24 &  19.84 & 19.62 &  140.86 & 1.85 \\
  1pct   & cust & 229.15 &  4031.66 &  33.09 & 41.06 &  446.38 & 2.93 \\
  \addlinespace
  Ruhr   & std  &  65.85 &  7953.42 &  65.80 & 18.61 &  820.10 & 2.67 \\
  10pct  & cust & 229.18 & 25475.05 & 133.56 & 38.38 & 2540.69 & 4.90 \\
  \bottomrule
\end{tabular*}
\end{table*}

\subsubsection*{Parameters.}

We take the default model parameters from MATSim. The stop time~$\StopTime$ is set to \SI{1}{\minute}, the maximum wait time~$\MaxWaitTime$ to \SI{5}{\minute}, the maximum trip time model parameters~$\alpha$ and $\beta$ to 1.7 and $\SI{2}{\minute}$, the wait time violation weight~$\WaitViolationWeight$ to 1, and finally the trip time violation weight~$\TripViolationWeight$ to 10.

CH preprocessing is taken from the open-source library RoutingKit\footnote{\url{https://github.com/RoutingKit/RoutingKit}}. We use the partitioning algorithm Inertial Flow~\cite{SchildS15} to compute a CCH order, with the balance parameter~$b$ set to 0.3. CH preprocessing and CCH order computation take less than one second each on the Berlin network. On the Rhine-Ruhr network, the former takes 4~seconds and the latter takes 6~seconds. For smaller search spaces, we apply the more sophisticated perfect CCH customization algorithm~\cite{DibbeltSW16}.

\begin{table*}[tb]
  \caption{Performance of resolving ride requests on various benchmark instances with standard and customizable CHs. We report the time to compute the shortest direct path from the pickup to the dropoff spot, the time for the BCH searches, the time to try all ordinary candidate insertions, the time to treat the special cases (pickup before the next stop, pickup after the last stop, and dropoff after the last stop), the time to update the preprocessed data (including bucket entry generation), and the total running time. All running times are given in microseconds. In addition, we report the size of the superset~$C$ of promising candidate vehicles.}
  \label{tab:request-resolution-loud}
  \begin{tabular*}{\linewidth}{
    l@{\extracolsep{\fill}}l@{\extracolsep{\fill}}S@{\extracolsep{\fill}}S@{\extracolsep{\fill}}
    S@{\extracolsep{\fill}}S@{\extracolsep{\fill}}S@{\extracolsep{\fill}}S@{\extracolsep{\fill}}
    S@{\extracolsep{\fill}}S@{\extracolsep{\fill}}S}
  \toprule
  & & & & \multicolumn{2}{c}{\makebox[0pt]{ordinary}} & \multicolumn{3}{c}{\makebox[0pt]{special insertions}} & & \\
  \cmidrule{7-9}
  & & & & \multicolumn{2}{c}{\makebox[0pt]{insertions}} & {pickup} & \hspace{-\tabcolsep}pickup & {dropoff} & & \\
  \cmidrule{5-6}
  input & CH & {direct} & {BCH} & {$|C|$} & {time} & {at beg} & \hspace{-\tabcolsep}at end & {at end} & {upd} & {total} \\
  \midrule
  Berlin & std  & 11.03 &  60.76 &  48 &  1.70 &  9.77 &  9.62 &  555.71 & 45.02 &  693.60 \\
  1pct   & cust &  8.35 &  65.99 &  48 &  1.71 &  8.84 &  9.67 &  562.30 & 35.12 &  691.99 \\
  \addlinespace
  Berlin & std  & 10.73 & 143.29 & 277 & 20.89 & 21.86 &  5.37 &  379.81 & 42.07 &  624.02 \\
  10pct  & cust &  8.02 & 213.88 & 280 & 20.79 & 20.77 &  5.20 &  369.84 & 33.46 &  671.96 \\
  \addlinespace
  Ruhr   & std  & 10.98 &  80.56 & 118 &  4.93 & 25.15 & 34.29 & 3308.34 & 42.67 & 3506.91 \\
  1pct   & cust &  9.97 & 134.04 & 117 &  5.25 & 28.37 & 34.92 & 3376.65 & 41.86 & 3631.07 \\
  \addlinespace
  Ruhr   & std  & 10.18 & 264.78 & 666 & 50.71 & 70.84 & 12.75 & 1977.47 & 40.77 & 2427.51 \\
  10pct  & cust &  8.96 & 536.08 & 661 & 51.40 & 74.00 & 12.44 & 2019.25 & 39.88 & 2742.00 \\
  \bottomrule
\end{tabular*}
\end{table*}

\subsection{Elliptic Pruning.}

We start by evaluating the effectiveness and efficiency of elliptic pruning. \Cref{tab:bucket-entry-generation} shows the reduction in search-space size achieved by conditions~(a) and (b) from \cref{thm:elliptic-pruning}. The average unpruned CH search space contains roughly 210~vertices on the Berlin instances and 240~vertices on the Ruhr instances. Only 25\,\% of them satisfy condition~(a), and even less than 10\,\% satisfy condition~(b). When combined, they decrease the average search-space size (and thus the number of bucket entries) by a factor of more than 20. With CCHs, condition~(a) prunes significantly less vertices. However, as condition~(b) still prunes more than 90\,\% of the vertices, the number of bucket entries is about the same as with standard CHs. The time to generate (source or target) bucket entries for a new stop is divided roughly equally between the search from the new stop, the search from its neighbor, and the propagation of the distance labels of the latter search into the search space of the former search.

\Cref{tab:bch-searches} shows the performance of BCH searches and bucket entry removal. Due to elliptic pruning, BCH searches scan relatively few bucket entries, and are therefore very fast. On Berlin-1pct, a BCH search takes merely 15~microseconds. On Berlin-10pct, where we have 10~times more vehicles and 9~times more ride requests, the running time doubles with standard CHs, and triples with CCHs. On Ruhr-10pct, our largest and densest instance, a BCH search takes 65~microseconds with standard CHs and twice as much with CCHs. Since we need four BCH searches per ride request, this makes at most half a millisecond, fast enough for interactive applications. Taking merely a few microseconds, the time spent on bucket entry removal is negligible.

\subsection{Resolving Ride Requests.}

We next evaluate the performance of the matching algorithm. \Cref{tab:request-resolution-loud} reports the time for each of its phases. Recall that LOUD tries only ordinary insertions into vehicles that have been seen during the BCH searches. We observe that this (exact) filter works very well, with less than 5\,\% of the vehicles passing through in all cases. Consequently, it takes only a few microseconds to try all ordinary insertions. Note that the search for special-case insertions that insert the pickup before and the dropoff after the last stop on a vehicle's route takes up the largest fraction of the total time (60\,\% on Berlin-10pct, 80\,\% on the sparser Berlin-1pct, and even almost 95\,\% on Ruhr-1pct). Interestingly, the total time on Berlin is always between 600 and 700~microseconds, although it is divided differently between the phases depending on the sparsity of the vehicles and ride requests. On the Rhine-Ruhr benchmark instances, we observe that the running times are around 3~milliseconds.

\Cref{tab:special-case-treatments} reports detailed statistics about the special-case treatments. Recall that LOUD discards as many insertions before the next scheduled stop as possible using cheap lower bounds on the pickup detour, in order to avoid costly extra CH queries. We observe that these lower bounds work very well. On average, we only need a single extra CH query per ride request on Berlin, and roughly two extra queries on the Rhine-Ruhr instances.

\begin{table*}[tb]
  \caption{Detailed statistics about the special-case treatments on various benchmark instances with standard and customizable CHs. For each special-case treatment, we report the number of insertions tried and the running time (in microseconds). For handling pickups before the next stop, we additionally report the number of CH queries needed per ride request. For handling pickups and dropoffs after the last stop, we additionally report the number of last stops visited during the reverse Dijkstra searches from the pickup and dropoff spot, respectively.}
  \label{tab:special-case-treatments}
  \begin{tabular*}{\linewidth}{
    l@{\extracolsep{\fill}}l@{\extracolsep{\fill}}S@{\extracolsep{\fill}}S@{\extracolsep{\fill}}
    S@{\extracolsep{\fill}}S@{\extracolsep{\fill}}S@{\extracolsep{\fill}}S@{\extracolsep{\fill}}
    S@{\extracolsep{\fill}}S@{\extracolsep{\fill}}S}
  \toprule
  & & \multicolumn{3}{c}{\makebox[0pt]{pickup at beginning}} & \multicolumn{3}{c}{\makebox[0pt]{pickup at end}} & \multicolumn{3}{c}{\makebox[0pt]{dropoff at end}} \\
  \cmidrule{3-5}\cmidrule{6-8}\cmidrule{9-11}
  input & CH & {inserts} & {\hspace{-\tabcolsep}queries} & {time} & {stops} & {\hspace{-\tabcolsep}inserts} & {time} & {stops} & {inserts} & {time} \\
  \midrule
  Berlin & std  &   69.68 & 0.80 &  9.77 & 1.54 & 1.54 &  9.62 &  120.90 &  18.07 &  555.71 \\
  1pct   & cust &   70.38 & 0.79 &  8.84 & 1.54 & 1.54 &  9.67 &  120.90 &  17.86 &  562.30 \\
  \addlinespace
  Berlin & std  &  582.86 & 0.80 & 21.86 & 3.88 & 3.88 &  5.37 &  731.01 & 100.86 &  379.81 \\
  10pct  & cust &  585.56 & 0.80 & 20.77 & 3.88 & 3.88 &  5.20 &  731.01 &  99.38 &  369.84 \\
  \addlinespace
  Ruhr   & std  &  184.45 & 1.76 & 25.15 & 1.74 & 1.74 & 34.29 &  302.15 &  26.27 & 3308.34 \\
  1pct   & cust &  183.48 & 1.76 & 28.37 & 1.74 & 1.74 & 34.92 &  302.15 &  25.86 & 3376.65 \\
  \addlinespace
  Ruhr   & std  & 1329.64 & 2.57 & 70.84 & 2.42 & 2.42 & 12.75 & 1795.76 & 111.29 & 1977.47 \\
  10pct  & cust & 1317.19 & 2.59 & 74.00 & 2.42 & 2.42 & 12.44 & 1795.76 & 109.18 & 2019.25 \\
  \bottomrule
\end{tabular*}
\end{table*}

\subsection{Comparison to Related Work.}

\begin{table*}[!tb]
  \caption{Performance of resolving ride requests on various benchmark instances with the heuristic MATSim algorithm and its exact variant. We report the time for the filtering phase, the search to the pickup, the search from the pickup, the search to the dropoff, the search from the dropoff, the evaluation phase, and the total running time. All running times are given in milliseconds. Moreover, we report the number of insertions tried during the filtering phase, as well as the number of filtered insertions.}
  \label{tab:request-resolution-matsim}
  \begin{tabular*}{\linewidth}{
    l@{\extracolsep{\fill}}l@{\extracolsep{\fill}}S@{\extracolsep{\fill}}S@{\extracolsep{\fill}}
    S@{\extracolsep{\fill}}S@{\extracolsep{\fill}}S@{\extracolsep{\fill}}S@{\extracolsep{\fill}}
    S@{\extracolsep{\fill}}S@{\extracolsep{\fill}}S}
  \toprule
  & & \multicolumn{3}{c}{\makebox[0pt]{geometric filtering}} & \multicolumn{4}{c}{\makebox[0pt]{Dijkstra searches}} & \multicolumn{1}{c}{eval} & \\
  \cmidrule{3-5}\cmidrule{6-9}\cmidrule{10-10}
  input & var & {tried} & {filtered} & {time} & {to $p$} & {\hspace{-\tabcolsep}from $p$} & {to $d$} & {\hspace{-\tabcolsep}from $d$} & {time} & {total} \\
  \midrule
  Berlin & heu &  1811 &   101 &  0.26 &  3.53 &  3.45 &  3.57 &  2.94 & 0.01 &  13.75 \\
  1pct   & ex  &  1811 &  1354 &  0.30 &  5.03 &  4.69 &  4.61 &  4.59 & 0.05 &  19.29 \\
  \addlinespace
  Berlin & heu & 18006 &   386 &  2.26 &  4.00 &  4.07 &  4.10 &  3.73 & 0.03 &  18.18 \\
  10pct  & ex  & 18009 & 12708 &  3.24 &  5.08 &  4.84 &  4.68 &  4.80 & 0.43 &  23.07 \\
  \addlinespace
  Ruhr   & heu &  5706 &    53 &  1.23 & 14.41 & 17.83 & 19.52 &  9.17 & 0.01 &  62.16 \\
  1pct   & ex  &  5859 &  3366 &  1.45 & 50.68 & 50.12 & 49.54 & 49.33 & 0.38 & 201.49 \\
  \addlinespace
  Ruhr   & heu & 50620 &   126 & 10.63 & 20.36 & 27.06 & 31.76 & 18.28 & 0.04 & 108.14 \\
  10pct  & ex  & 52474 & 32374 & 14.53 & 50.46 & 52.21 & 50.51 & 51.66 & 3.69 & 223.05 \\
  \bottomrule
\end{tabular*}
\end{table*}

Comparing running times is often difficult, due to different machines, benchmark instances, and programming skills. In addition, objectives and constraints in dynamic ridesharing come in a wide variety. For a fair comparison, we carefully reimplemented one competitor and run it on the same machine and instances. We choose the dispatching algorithm in MATSim for various reasons.

First, MATSim uses exactly the same problem formulation. Second, since MATSim is actually used in industry and academia, the comparison of LOUD to MATSim is of particular practical relevance. Third, since the code of MATSim is publicly available, there are no unclear implementation details (which is not the case for the other competitors). Fourth, the running times reported by the algorithms mentioned in \cref{sec:introduction} are roughly similar. On a benchmark instance comparable to Berlin-10pct, the algorithm by Huang et al.~\cite{HuangBJW14} takes between 10 and 100~milliseconds to process a ride request. For their simulated-annealing algorithm, Jung et al.~\cite{JungJP16} report running times of 174--257~milliseconds per request (on a much smaller instance). Unfortunately, \mbox{T-Share}~\cite{MaZW13} does not report any absolute running times. Our MATSim reimplementation takes 14 and 18~milliseconds per request on Berlin-1pct and Berlin-10pct, respectively; see \cref{tab:request-resolution-matsim} for further details. Note that this is 15~times faster than the official MATSim implementation, which is written in Java.

\Cref{tab:comparison} compares LOUD to the dispatching algorithm in MATSim. Besides a reimplementation of the original heuristic algorithm (MATSim-h), we also consider an exact variant (MATSim-e). Recall that the filtering phase tries all possible insertions into each vehicle's route, where all needed detours are estimated using geometric distances. More precisely, the travel time between any two vertices is given by $(\DistScalingParam \cdot \mu) / (\SpdScalingParam \cdot \VehicleSpd)$, where $\mu$ is the straight-line distance, $\VehicleSpd$ is the estimated vehicle speed, and $\DistScalingParam$ and $\SpdScalingParam$ are parameters. MATSim-h (in accordance with the official code) sets the parameters~$(\VehicleSpd, \DistScalingParam, \SpdScalingParam)$ to $(\SI{30}{\kilo\metre\per\hour}, 1.3, 1.5)$. MATSim-e sets $\VehicleSpd$ to the maximum travel speed that occurs in the network, and both $\DistScalingParam$ and $\SpdScalingParam$ to 1.

\begin{table*}[tb]
  \caption{Comparison of LOUD to the heuristic MATSim algorithm (and its exact variant). We report the average running time per request and statistics about the solution quality. For requests, we report the average and 95th percentile of the wait times, and the average ride and trip time. For vehicles, we report the average time spent driving empty, spent driving occupied, spent picking up or dropping off riders, and the average operation time.}
  \label{tab:comparison}
  \begin{tabular*}{\linewidth}{
    l@{\extracolsep{\fill}}l@{\extracolsep{\fill}}r@{\extracolsep{\fill}}r@{\extracolsep{\fill}}
    r@{\extracolsep{\fill}}r@{\extracolsep{\fill}}r@{\extracolsep{\fill}}r@{\extracolsep{\fill}}
    r@{\extracolsep{\fill}}r@{\extracolsep{\fill}}r}
  \toprule
  & & & \multicolumn{4}{c}{\makebox[0pt]{request statistics [m:s]}} & \multicolumn{4}{c}{\makebox[0pt]{vehicle statistics [h:m]}} \\
  \cmidrule{4-7}\cmidrule{8-11}
  & & time & \multicolumn{2}{c}{wait} & ride & trip & empty & occ & stop & op \\
  \cmidrule{4-5}
  instance & algorithm & [ms] & avg & \hspace{-\tabcolsep}95\,\%ile & & & & & & \\
  \midrule
  Berlin & MATSim-h &  13.76 & 4:11 &  8:21 & 14:11 & 18:22 & 0:35 & 3:19 & 0:33 & 4:27 \\
  1pct   & MATSim-e &  19.29 & 4:12 &  8:20 & 14:11 & 18:23 & 0:36 & 3:19 & 0:33 & 4:28 \\
         & LOUD-CH  &   0.70 & 4:12 &  8:20 & 14:11 & 18:23 & 0:36 & 3:19 & 0:33 & 4:28 \\
         & LOUD-CCH &   0.70 & 4:12 &  8:20 & 14:11 & 18:23 & 0:36 & 3:19 & 0:33 & 4:28 \\
  \addlinespace
  Berlin & MATSim-h &  18.18 & 3:45 &  8:21 & 14:52 & 18:36 & 0:14 & 2:31 & 0:29 & 3:14 \\
  10pct  & MATSim-e &  23.07 & 3:47 &  8:13 & 14:51 & 18:38 & 0:13 & 2:31 & 0:29 & 3:13 \\
         & LOUD-CH  &   0.64 & 3:47 &  8:13 & 14:51 & 18:38 & 0:13 & 2:31 & 0:29 & 3:13 \\
         & LOUD-CCH &   0.69 & 3:47 &  8:13 & 14:51 & 18:38 & 0:13 & 2:31 & 0:29 & 3:13 \\
  \addlinespace
  Ruhr   & MATSim-h &  62.17 & 5:57 & 12:35 & 13:05 & 19:02 & 1:09 & 3:23 & 0:33 & 5:05 \\
  1pct   & MATSim-e & 201.50 & 5:54 & 12:22 & 13:52 & 19:46 & 1:04 & 3:19 & 0:33 & 4:56 \\
         & LOUD-CH  &   3.52 & 5:54 & 12:22 & 13:52 & 19:46 & 1:04 & 3:19 & 0:33 & 4:56 \\
         & LOUD-CCH &   3.65 & 5:54 & 12:22 & 13:52 & 19:46 & 1:04 & 3:19 & 0:33 & 4:56 \\
  \addlinespace
  Ruhr   & MATSim-h & 108.14 & 3:36 &  8:17 & 13:12 & 16:49 & 0:24 & 2:41 & 0:30 & 3:35 \\
  10pct  & MATSim-e & 223.06 & 3:43 &  8:43 & 13:54 & 17:38 & 0:22 & 2:34 & 0:30 & 3:25 \\
         & LOUD-CH  &   2.44 & 3:43 &  8:43 & 13:54 & 17:38 & 0:22 & 2:34 & 0:30 & 3:25 \\
         & LOUD-CCH &   2.77 & 3:43 &  8:43 & 13:54 & 17:38 & 0:22 & 2:34 & 0:30 & 3:25 \\
  \bottomrule
\end{tabular*}
\end{table*}

We observe that LOUD is 30~times (20~times) faster than MATSim-h on Berlin-10pct (Berlin-1pct). On Ruhr-10pct, we even see a speedup of around 45. Since MATSim-e and both LOUD variants are exact algorithms, all three make the same matching decisions, and thus obtain the same solution quality. Interestingly, although MATSim-h does not find the best insertion for each individual ride request, it obtains slightly better wait times in total on Berlin-10pct.

On Ruhr-10pct, MATSim-h achieves both better wait and ride times than the exact algorithms. Note, however, that the vehicles operate less efficiently. That is, MATSim-h utilizes the fleet under the given constraints worse than the three other algorithms.

\subsection{Integrating LOUD into the MATSim Software Package.}

\begin{table*}[tb]
  \caption{Running time and solution quality of the MATSim mobsim when using the built-in dispatching algorithm, our reimplementations of the built-in algorithm, and our LOUD implementations. We consider various benchmark instances, each without traffic (vehicles travel at free-flow speed) and with traffic (travel speeds depend on how many vehicles are on a segment). For requests, we report the average and 95th percentile of the wait times, and the average ride and trip time. For vehicles, we report the average distance driven empty, occupied, and in total.}
  \label{tab:matsim-mobsim}
  \begin{tabular*}{\linewidth}{
    l@{\extracolsep{\fill}}c@{\extracolsep{\fill}}l@{\extracolsep{\fill}}r@{\extracolsep{\fill}}
    r@{\extracolsep{\fill}}r@{\extracolsep{\fill}}r@{\extracolsep{\fill}}r@{\extracolsep{\fill}}
    r@{\extracolsep{\fill}}r@{\extracolsep{\fill}}r}
  \toprule
  & & & & \multicolumn{4}{c}{\makebox[0pt]{request stats [m:s]}} & \multicolumn{3}{c}{\makebox[0pt]{vehicle stats [km]}} \\
  \cmidrule{5-8}\cmidrule{9-11}
  & & & time & \multicolumn{2}{c}{wait} & ride & trip & emp & occ & total \\
  \cmidrule{5-6}
  input & \hspace{-\tabcolsep}traf\hspace{-\tabcolsep} & algorithm & [h:m] & avg & \hspace{-\tabcolsep}95\,\%ile & & & & & \\
  \midrule
  Berlin &  $\circ$  & built-in &  0:59 & 4:15 &  8:44 & 14:44 & 19:00 & 11.9 & 91.9 & 103.8 \\
  1pct   &           & MATSim-h &  0:07 & 4:16 &  8:35 & 14:32 & 18:48 & 12.2 & 91.8 & 104.0 \\
         &           & MATSim-e &  0:08 & 4:17 &  8:34 & 14:32 & 18:50 & 12.3 & 91.7 & 104.0 \\
         &           & LOUD-CH  &  0:03 & 4:16 &  8:35 & 14:33 & 18:49 & 12.1 & 91.5 & 103.6 \\
         &           & LOUD-CCH &  0:03 & 4:17 &  8:36 & 14:32 & 18:49 & 12.3 & 91.8 & 104.1 \\
  \addlinespace
  Berlin & $\bullet$ & built-in &  0:56 & 5:48 & 14:08 & 19:05 & 24:53 & 13.0 & 92.1 & 105.1 \\
  1pct   &           & MATSim-h &  0:07 & 5:50 & 13:50 & 18:52 & 24:42 & 13.9 & 92.6 & 106.5 \\
         &           & MATSim-e &  0:09 & 5:53 & 14:14 & 18:50 & 24:42 & 14.1 & 92.3 & 106.4 \\
         &           & LOUD-CH  &  0:03 & 5:58 & 14:35 & 18:45 & 24:43 & 13.9 & 92.4 & 106.3 \\
         &           & LOUD-CCH &  0:03 & 5:51 & 14:10 & 18:43 & 24:33 & 13.8 & 92.3 & 106.1 \\
  \addlinespace
  Berlin &  $\circ$  & built-in & 12:41 & 3:53 &  8:53 & 16:30 & 20:24 &  3.8 & 64.8 &  68.7 \\
  10pct  &           & MATSim-h &  1:10 & 3:59 &  8:50 & 15:50 & 19:49 &  4.5 & 66.6 &  71.1 \\
         &           & MATSim-e &  1:25 & 4:01 &  8:42 & 15:49 & 19:50 &  4.3 & 66.5 &  70.8 \\
         &           & LOUD-CH  &  0:29 & 4:02 &  8:46 & 15:50 & 19:52 &  4.4 & 66.5 &  70.9 \\
         &           & LOUD-CCH &  0:29 & 4:02 &  8:47 & 15:46 & 19:48 &  4.4 & 66.5 &  70.9 \\
  \addlinespace
  Berlin & $\bullet$ & built-in & 12:57 & 4:34 & 10:42 & 19:28 & 24:02 &  4.2 & 65.4 &  69.6 \\
  10pct  &           & MATSim-h &  1:13 & 5:01 & 11:15 & 19:34 & 24:35 &  5.1 & 67.6 &  72.8 \\
         &           & MATSim-e &  1:27 & 5:11 & 11:21 & 19:37 & 24:48 &  5.1 & 67.5 &  72.6 \\
         &           & LOUD-CH  &  0:32 & 5:18 & 11:35 & 19:43 & 25:01 &  5.1 & 67.7 &  72.8 \\
         &           & LOUD-CCH &  0:32 & 5:18 & 11:30 & 19:37 & 24:55 &  5.2 & 67.5 &  72.7 \\
  \bottomrule
\end{tabular*}
\end{table*}

MATSim is a full-fledged software package for transport simulations that is currently used in industry and academia. It offers support for a wide variety of transportation types, including driving, walking, transit, cycling, and ridesharing systems. In the previous section, we reimplemented the algorithm used by MATSim to dispatch shared taxi-like vehicles and compared it to our LOUD algorithm. In this section, we go the other way around and integrate our LOUD implementation into the MATSim software package.

MATSim simulates the movement of each inhabitant (also called \emph{agent}) in the study area. For each agent, MATSim maintains a set of alternative \emph{day plans}, which consist of a sequence of activities at different locations and trips between these locations. Trips can use driving, walking, transit, cycling, ridesharing systems, and more modes of transportation. For each trip, a day plan contains the full route that the agent will take. Moreover, a \emph{score} is associated with each day plan that represents the plan's fitness or attractiveness.

The goal of MATSim is to predict the movement of a population, i.e., to generate realistic day plans. To do so, MATSim operates in iterations. Each iteration consists of the three phases replanning, mobsim (for \underline{mob}ility \underline{sim}ulation), and scoring. At the beginning of each iteration, each agent selects one of its day plans based on their scores. A certain fraction of the agents is allowed to modify their selected day plan, for example by changing the time or location of an activity or the route or mode of a trip. During the mobsim, the agents move along the routes determined by their day plan. At the end of each iteration, each agent associates a score with their day plan based on how well the plan worked. MATSim stops when the scores have converged. The final day plans then realistically predict the movement of the population in the study area.

We are mainly interested in the mobsim, which moves the agents along their routes as follows. The mobsim associates a FIFO queue with each edge in the road network. Moreover, each edge has a free-flow travel time, a flow capacity, and a storage capacity. During the mobsim, the agents move from queue to queue along their routes. In each time step, the mobsim repeatedly moves the agent at the head of each queue to the tail of the next queue as long as three conditions hold. First, the agent has spent at least the free-flow travel time in the queue. Second, the number of agents removed from the queue in the current time step is below the flow capacity. Third, the number of agents in the next queue is below the storage capacity.

In addition to moving agents from queue to queue along preplanned routes, the mobsim also processes dynamic ride requests. For each request in a time step, the mobsim computes the best insertion and modifies the planned route of the selected vehicle accordingly. The shared vehicles themselves move through the queues along their route as any other agent.

MATSim is written in the Java programming language. Its functionality is organized into modules called \emph{contributions}. Support for ridesharing systems is provided by the \texttt{drt} contribution. At its heart is a component called \texttt{DefaultDrtOptimizer}, which is invoked whenever a ride request is received. It picks a vehicle and modifies its route accordingly or rejects the request. To make LOUD accessible within MATSim, we implement a \texttt{LoudDrtOptimizer} that resembles the functionality of the \texttt{DefaultDrtOptimizer}, but uses LOUD rather than the built-in dispatching algorithm.

We do not reimplement LOUD in Java. Instead, the \texttt{LoudDrtOptimizer} accesses our native LOUD implementation through the Java Native Interface (JNI). The native code takes the pickup and dropoff spot and the earliest departure time and returns the best insertion, including the full paths to the pickup, from the pickup, to the dropoff, and from the dropoff. Note that the vehicle routes are maintained twice. The native code needs the routes for the matching decisions and maintains them as discussed in \cref{sec:maintaining-feasibility}. The Java code needs the routes to move the shared vehicles through the queues. We reuse the existing Java code for maintaining and modifying the routes in Java.

\Cref{tab:matsim-mobsim} compares the running time and solution quality of the mobsim when using the built-in dispatching algorithm, our reimplementations of the built-in algorithm, and our LOUD implementations. We consider each instance without and with traffic. Note that in this experiment, we make matching decisions based on free-flow travel times. That is, all computed insertions and paths are optimal with respect to the free-flow metric. Without traffic ($\circ$), all agents move at free-flow speed through the queues. We achieve this by multiplying the flow capacities and storage capacities by 100, which essentially makes traffic congestion impossible. With traffic ($\bullet$), the agents may get stuck in traffic jams, leading to delays in the arrival of some agents.

MATSim reports the running time for each phase (replanning, mobsim, scoring) of an iteration separately. However, it does not further subdivide the running time of the mobsim. Therefore, the running time reported in \cref{tab:matsim-mobsim} includes not only the time for the matching decisions but also the effort to maintain the vehicle routes in Java and the effort to move the agents of \emph{all} transportation types through the queues. That is, we cannot expect to see the speedups reported in the previous section in this experiment.

On Berlin-1pct, we decrease the total running time of the mobsim from around one hour to only three minutes. Given that we replaced only the dispatcher for the ridesharing system (and reuse any other code as is), this is a considerable improvement. Likewise, the running time on Berlin-10pct is significantly reduced from 13~hours to half an hour, again by replacing only the dispatcher. When using our LOUD implementation within MATSim, the matching decisions are no longer the performance bottleneck of the mobsim.

We observe that the mobsim variants based on MATSim-e, LOUD-CH and LOUD-CCH obtain slightly different request and vehicle statistics. This may be surprising because all three are exact algorithms and thus should make the same matching decisions. The reason for the divergence is that shortest paths generally are not unique. Even when MATSim-e, LOUD-CH and LOUD-CCH obtain the same insertion, they can still return different paths to and from the pickup and dropoff spot. The actual paths, however, can affect all subsequent ride requests, since the detour to service a request depends on the current locations of the vehicles, which in turn depend on the paths of the vehicles.

We do not run into such trouble in our discrete-event simulation, since we essentially teleport the vehicles from stop to stop rather than moving them along their routes. When the location of a vehicle currently driving from stop~$s$ to stop~$s'$ is needed, we run a CH search from $s$ to $s'$, retrieve the actual path, and traverse the path (starting at the departure time at $s$) until we reach the current point in time. Since we use the same method to retrieve the current location for all three algorithms, the matching decisions do not diverge.

\section{Conclusion and Future Work.}
\label{sec:conclusion}

We presented LOUD, a novel algorithm for large-scale dynamic ridesharing. Unlike most competitors, we do not require a huge number of calls to Dijkstra's algorithm, but adapt a modern route planning technique developed for the many-to-many problem (bucket-based contraction hierarchies). Our experiments on the state-of-the-art Open Berlin Scenario with \num{10000}~vehicles and more than \num{100000}~ride requests show that LOUD answers a request in less than a millisecond, which is 30~times faster than current algorithms. This gives plenty of leeway for interactive applications on cities even larger than Berlin. For transport simulations, LOUD is even more important. Since simulators process each request hundreds of times, running time is an even bigger issue than in interactive applications, and requests cannot be answered ``fast enough''.

Since the special-case treatments take up the largest fraction of the running time, it would be interesting to eliminate the two remaining (local) Dijkstra searches. A possible approach would be to maintain \emph{additional} buckets that store the \emph{unpruned} forward CH search spaces of the ends of the current vehicle routes. Note that we cannot apply elliptic pruning because the leeway is unbounded. Instead, we can keep the buckets sorted (e.g., using search trees), which allows us to stop a bucket scan when we visit an entry that cannot possibly yield an insertion better than the currently best one.

Parallelization could also be a key to better performance. Most likely, this would be a combination of fine-grained parallelism and parallelization over several requests. Independent of the internals of LOUD, the main issue here is that a change caused by an earlier request can affect all subsequent requests. Therefore, it would be interesting to investigate how independent requests can be identified or alternatively how dependencies can be detected and repaired. One could also study to what extent certain dependencies can be ignored without severely affecting solution quality.

Finally, it would be interesting to increase the solution space. For example, one could allow requests already matched to a vehicle to be reordered or moved to a different vehicle. Another interesting project are variable pickup and dropoff spots, where riders agree to walk a short distance to a location where it is more efficient to pick them up or drop them off (e.g., on main roads rather than in traffic-calmed areas). We believe that the techniques developed for LOUD might be key ingredients for such generalized systems that promise higher overall solution quality.

This is the full version of a conference paper~\cite{BuchholdSW21} that was presented at the 23rd SIAM Symposium on Algorithm Engineering and Experiments (ALENEX'21).

\bibliography{References}

\end{document}